\newlist{todolist}{itemize}{2}
\setlist[todolist]{label=$\square$}
\definecolor{azure}{rgb}{0.0, 0.5, 1.0}
\definecolor{frenchblue}{rgb}{0.0, 0.45, 0.73}
\definecolor{forestgreen(traditional)}{rgb}{0.0, 0.27, 0.13}
\definecolor{mygreen}{rgb}{0.09, 0.45, 0.27}
\definecolor{myblue}{rgb}{0.2383,0.5195,0.7734}
\definecolor{mygreen}{rgb}{0.6445,0.9297,0.0039}
\definecolor{darklavender}{rgb}{0.45, 0.31, 0.59}
\definecolor{americanrose}{rgb}{1.0, 0.01, 0.24}
\definecolor{pigblue}{rgb}{0.2, 0.2, 0.6}
\definecolor{blue(ryb)}{rgb}{0.01, 0.28, 1.0}
\definecolor{amethyst}{rgb}{0.6, 0.4, 0.8}
\definecolor{deepmagenta}{rgb}{0.8, 0.0, 0.8}
\definecolor{carminered}{rgb}{1.0, 0.0, 0.22}
\definecolor{iris}{rgb}{0.35, 0.31, 0.81}
\newtheorem{theorem}{\hspace{0pt}\bf Theorem}
\newtheorem{definition}{\hspace{0pt}\bf Definition}
\newcommand*{\herm}{{\mkern-1.5mu\mathsf{H}}}
\newcommand{\reals}{{\mbox{\bf R}}}
\newcommand{\diag}{\mathop{\bf diag}}
\newcommand{\argmin}{\mathop{\rm argmin}}
\def \SNR     {\text{\normalfont SNR}   }
\def \cov     {\text{\normalfont cov}  }
\def \diag    {\text{\normalfont diag} }
\def \reals    {{\mathbb R}}
\def\mbC{{\ensuremath{\mathbb C}}}
\def\ccalH{{\ensuremath{\mathcal H}}}
\def\ccalL{{\ensuremath{\mathcal L}}}
\def\ccalM{{\ensuremath{\mathcal M}}}
\def\ccalN{{\ensuremath{\mathcal N}}}
\def\ccalO{{\ensuremath{\mathcal O}}}
\def\ccal0{{\ensuremath{\mathcal 0}}}
\def\bbzero{{\ensuremath{\boldsymbol 0}}}
\def\bbA{{\ensuremath{\boldsymbol A}}}
\def\bbB{{\ensuremath{\boldsymbol B}}}
\def\bbE{{\ensuremath{\boldsymbol E}}}
\def\bbI{{\ensuremath{\boldsymbol I}}}
\def\bbX{{\ensuremath{\boldsymbol X}}}
\def\bbY{{\ensuremath{\boldsymbol Y}}}
\def\bba{{\ensuremath{\boldsymbol a}}}
\def\bbb{{\ensuremath{\boldsymbol b}}}
\def\bbc{{\ensuremath{\boldsymbol c}}}
\def\bbe{{\ensuremath{\boldsymbol e}}}
\def\bbh{{\ensuremath{\boldsymbol h}}}
\def\bbu{{\ensuremath{\boldsymbol u}}}
\def\bbv{{\ensuremath{\boldsymbol v}}}
\def\bbs{{\ensuremath{\boldsymbol s}}}
\def\bbx{{\ensuremath{\boldsymbol x}}}
\def\bby{{\ensuremath{\boldsymbol y}}}
\def\bb0{{\ensuremath{\boldsymbol 0}}}
\def\bbgamma{\boldsymbol{\gamma}}
\def\bbGamma{\boldsymbol{\Gamma}}
\def\bbTheta{\boldsymbol{\Theta}}
\def\bbSigma{\boldsymbol{\Sigma}}
\def\hbgamma{\hat\bbgamma}
\def\hhatgamma{\hat\gamma}
\definecolor{crimson}{rgb}{0.86, 0.08, 0.24}
\definecolor{scarlet}{rgb}{1.0, 0.13, 0.0}
\definecolor{hookersgreen}{rgb}{0.0, 0.44, 0.0}
\definecolor{cultramarine}{rgb}{0.25, 0.0, 1.0}
\newcommand\revised[1]{#1}
\def \inv {{-1}}
\def \bS {\bbSigma}
\def \bSb {\bbSigma_{\backslash i}}
\def \bSbi {\bSb^{-1}}
\def \per {\ensuremath{P_\textrm{ER}}}
\def \pmd {\ensuremath{P_\textrm{MD}}}
\def \pfa {\ensuremath{P_\textrm{FA}}}
\title{Robust Activity Detection for Massive Random Access}
\author{Xinjue~Wang,~\IEEEmembership{Graduate Student Member,~IEEE,}
        Esa~Ollila,~\IEEEmembership{Senior Member,~IEEE,}
        and~Sergiy~A.~Vorobyov,~\IEEEmembership{Fellow,~IEEE}
\thanks{This work was supported by the Research Council of Finland under Grant 359848, and Grant 357715.
An earlier version of this paper was presented at the IEEE International Conference on Acoustics, Speech and Signal Processing (ICASSP), Hyderabad, India, 2025 [DOI: 10.1109/ICASSP49660.2025.10889433].
\textit{(Corresponding author: Esa Ollila.)}

The authors are with the Department of Information and Communications Engineering, Aalto University, 02150 Espoo, Finland.}}
\begin{document}
\maketitle
\begin{abstract}
Massive machine-type communications (mMTC) are fundamental to the Internet of Things (IoT) framework in future wireless networks, involving the connection of a vast number of devices with sporadic transmission patterns. Traditional device activity detection (AD) methods are typically developed for Gaussian noise, but their performance may deteriorate when these conditions are not met, particularly in the presence of heavy-tailed impulsive noise.   In this paper, we propose robust statistical techniques for AD that do not rely on the Gaussian assumption and replace the Gaussian loss function with robust loss functions that can effectively mitigate the impact of heavy-tailed noise and outliers. First, we prove that the coordinate-wise (conditional) objective function is geodesically convex and derive a fixed-point (FP) algorithm for minimizing it, along with convergence guarantees. Building on the FP algorithm, we propose two robust algorithms for solving the full (unconditional) objective function: a coordinate-wise optimization algorithm (RCWO) and a greedy covariance learning-based matching pursuit algorithm (RCL-MP). Numerical experiments demonstrate that the proposed methods significantly outperform existing algorithms in scenarios with non-Gaussian noise, achieving higher detection accuracy and robustness. 
\end{abstract}
\begin{IEEEkeywords}
Device activity detection, machine-type communications, robust statistics, matching pursuit, fixed-point algorithm
\end{IEEEkeywords}

\section{Introduction}
\label{sec:intro}
\IEEEPARstart{M}{assive} machine-type communications (mMTC) represent a cornerstone of future wireless communication networks, particularly within the Internet of Things~(IoT) framework~\cite{liu2018sparse}. 
This paradigm involves connecting a vast number of devices such as sensors, machines, and robots, leading to immense data traffic. 
Unlike traditional communication systems, mMTC is characterized by high device density and sporadic transmission patterns, where only a small subset of devices is active at any given time, typically triggered by external events~\cite{bockelmann2018towards, chen2020massive, dawy2016toward}.
Consequently, dynamically and accurately detecting the set of active devices becomes a challenging and critical task.

\revised{
Conventional wireless systems rely on grant-based random access protocols where devices request transmission opportunities using a limited pool of orthogonal pilot sequences. 
When transmitting data, an active device randomly selects one of these sequences and sends it as a resource allocation request.
However, as the number of simultaneously active devices exceeds the number of available orthogonal sequences, inevitable collisions occur.
Resolving these collisions requires multiple signaling rounds between devices and the base station (BS), introducing delays that contradict the low-latency requirements of mMTC applications~\cite{liu2018massiveAMP, fengler2021nonTIT, ganesan2021clustering, chen2021phase}.

Grant-free random access addresses collision issues in mMTC by preassigning unique non-orthogonal pilot sequences to each device~\cite{shahab2020grant}.
Active devices transmit their pilots simultaneously without awaiting permission, eliminating signaling overhead and reducing latency for sporadic traffic scenarios~\cite{fengler2021sparse_thesis, fengler2021nonTIT}.
In both grant-based and grant-free random access schemes, accurate device \textit{activity detection} (AD) is important for establishing successful communications between the devices and the BS~\cite{kang2021minimum, kang2022scheduling, liu2018massiveAMP}.
In grant-free framework, AD is more challenging due to non-orthogonal pilot sequences and the lack of coordination among devices \cite{liu2018massiveAMP}.
In this paper, we focus on maintaining detection reliability under non-Gaussian noise conditions.
}

Device AD can be formulated as a compressive sensing problem due to the inherent sparsity arising from the small number of active devices relative to the total number of devices~\cite{liu2018sparse, chen2018sparse}. 
Many algorithms for the sparse signal recovery problem have been developed under the assumption that the data follow Gaussian distribution, justified by the central limit theorem and mathematical tractability. 
This Gaussianity assumption is widely made in sparse recovery methods such as approximate message passing (AMP)~\cite{donoho2009AMP, liu2018sparse, liu2018massiveAMP}, sparse Bayesian learning (SBL)~\cite{tipping2001sparsetruSBL}, M-SBL~\cite{wipf2007empiricalMSBL},  cyclic coordinatewise optimization (CWO)~\cite{fengler2021nonTIT}, and covariance learning-based orthogonal matching pursuit algorithm (CL-OMP)~\cite{ollila2024matching}.
Joint data and activity detection has also been studied in recent works~\cite{zhang2023jointBiGAMP, wang2021efficientJoint}. 
The AD in different mMTC systems, including multicell and cell-free setups, has also been investigated~\cite{wang24MultiCellAD, zhang24ADCELLFREESBL, rajoriya2024novelCELLFREE}. 
Additionally, model-based learning methods have been proposed based on unrolling classical algorithms like AMP using deep neural networks~\cite{shi2021unrollMassive, zou2024proximal, ma2023modelnoncoherent}.

\revised{
While modern wireless communication systems rely on Gaussian noise assumption with remarkable success, some environments may experience deviations due to outliers and impulsive noise, resulting in heavy-tailed distributions~\cite{zoubir2012robustsurvery, zoubir2018robust, gulgun2023massiveCauchy}.
Sources of such non-Gaussian noise include electromagnetic interference in industrial settings~\cite{gao2007spaceTWC, liu2015impulsiveTVT}, impulsive noise in mobile communication channels~\cite{middleton1999nongaussianTIT}, and heavy-tailed interference in wireless links~\cite{clavier2020experimental}. 
These phenomena represent specific cases where additional robustness can provide benefits.
In mMTC systems deployed in diverse environments, the variability in noise characteristics motivates the development of methods that maintain performance across different conditions~\cite{wang1999robusttsp, jakubisin2016approximate}. 
Traditional methods based on the Gaussian noise assumption may experience performance degradation when faced with heavy-tailed noise or outliers, potentially affecting reliability in IoT networks, for example, where devices operate in varying environmental conditions. 
For additional context on heavy-tailed noise settings in particular massive MIMO scenarios, we refer the reader to~\cite[Section~I-A]{gulgun2023massiveCauchy}.
}

To address the limitations of the Gaussianity assumption in handling non-Gaussian environments, we adopt robust statistical techniques that do not depend on this assumption. Specifically, we replace a Gaussian loss function by robust loss function that effectively mitigates the impact of heavy-tailed noise and outliers. 
\revised{These loss functions do not require precise prior knowledge of the noise distribution and they are related to M-estimation of scatter matrix within the framework of complex elliptically symmetric (CES) distributions~\cite{ollila2012surveyelliptical}, where M-estimation is a generalization of maximum likelihood estimation (MLE) robust to deviations from model assumptions.}
Adopting robust loss function significantly improve reliability of AD in scenarios with impulsive or heavy-tailed noises, conditions frequently encountered in wireless communications and mMTC systems~\cite{feintuch2023neural, feintuch2023neuralDOA, chen2024quantheavytailTIT, wang2014connectivityheavytailTWC, xiao2019vehicelheavytailIOTJ, chavali2011maximum}, yet performing similarly in the conventional Gaussian setting.

\subsection{Main Contributions}
Under Gaussian channel and noise, the device activity detection problem can be formulated as an MLE problem, in which the sample covariance matrix (SCM) of received signals is a sufficient statistic~\cite{chen2024CLmethod}. 
The resulting (Gaussian) negative log-likelihood function (LLF) is non-robust as it fails to down-weight data points with large generalized (Mahalanobis) distances. 
To address this, we propose using robust loss function that enhance robustness against non-Gaussian noise. 
However, this approach results in an objective function that lacks a closed-form solution, even in the conditional case, where other parameters are fixed, and the objective is solved for a single remaining parameter only.
\revised{We show that the coordinate-wise (conditional) objective function is geodesically convex and derive an efficient fixed-point (FP) algorithm for minimizing it, along with convergence guarantees.}
Building on the FP algorithm, we develop two robust AD methods: robust coordinatewise optimization (RCWO) algorithm and robust covariance learning-based matching pursuit algorithm (RCL-MP).

The proposed methods are then evaluated in extensive numerical experiments under various synthetic and realistic heavy-tailed noise conditions. 
The results show that the proposed  methods consistently outperform Gaussian-based counterparts in detection performance. 
Furthermore, the proposed RCL-MP algorithm achieves computational efficiency comparable to Gaussian-based CWO, making it practical for large-scale deployments. The RCWO algorithm on the other hand achieves better performance than the greedy RCL-MP algorithm when the number of active devices are large. 

Our proposed work is closely related to but distinct from previous studies. In~\cite{bai2023robustAD}, the authors employ low-dimensional approximation techniques, such as principal component analysis (PCA) of the channel, to enhance activity detection under highly varying channel conditions. In contrast, we focus on environments with non-Gaussian noise and fixed channel.
Effort addressing non-Gaussian noise scenarios has been lacking in mMTC literature and the present work aims to fill this gap.  We note that the authors in~\cite{marata2024activity} proposed a greedy CL-MP algorithm assuming Gaussian data. Our RCL-MP algorithm is a non-trivial extension of this work, making the method robust to outliers and severe noise conditions.  

\subsection{Paper Organization and Notations}
The remainder of this paper is structured as follows.
Section~\ref{sec:systemmodel} establishes the system model.
Section~\ref{sec:method_robustobj} introduces objectives for robust AD.
Section~\ref{sec:alg} gives the derivation of the proposed algorithms.
Section~\ref{sec:simu} provides numerical experiments to show the robust and superior performance of the proposed methods in comparison with six sparse signal recovery algorithms for AD.
Section~\ref{sec:Conclusion} concludes the paper.

Boldface lower case letters such as $\bba$ represent column vectors, while boldface capital letters like $\bbA$ denote matrices. 
The complex number set is denoted as $\mbC$.
The operations $(\cdot)^\top$ and $(\cdot)^\herm$ denote the transpose and Hermitian transpose, respectively.
The identity matrix is represented as $\bbI$.
We use $|\bbA|$ to denote the determinant of matrix $\bbA$, and use $\mathsf{E}[\bbA]$ to denote the expectation of $\bbA$.

\section{System Model}
\label{sec:systemmodel}
We consider an uplink single-cell massive random access scenario with $N$ single-antenna machine type devices (MTDs) communicating with a base station (BS) equipped with~$M$ antennas.
For the purpose of device identification, each device $n$ is preassigned a unique signature sequence $\bba_n = (a_{n1}, \ldots, a_{nL})$ of length $L$, which is known at the BS.
We also assume that the user traffic is sporadic i.e., only~$K\ll N$ devices are active during each coherence interval. 
The objective is to detect which subset of devices is active based on the received signal at the BS. 
In massive MTC scenarios, also the number of devices is greater than the pilot length (i.e.,~$N > L$ or $N \gg L$).

The generative signal model can be written as
\begin{align}
    \label{eq:basicCSImodel}
    \bbY = \bbA \bbX + \bbE, 
\end{align}
where $\bbY = (\bby_{1} \ \cdots \ \bby_{M}) \in \mbC^{L \times M}$ denotes the received signal matrix over $L$ signal dimensions (symbols) and $M$ antennas, the columns of $\bbY$ are independent due to independent and identically distributed (i.i.d.) channel coefficients over different antennas, $\bbA = (\bba_{1} \ \cdots \ \bba_{N}) \in \mbC^{L\times N}$ is the sensing matrix collecting pilot sequences of devices (known over-complete pilot matrix),
$\bbX = (\bbx_{1}  \ \cdots \ \bbx_{N})^\top \in \mbC^{N \times M}$ is the unobserved channel matrix with
\begin{align}
\bbx_{i} = \sqrt{\gamma_{i}} \bbh_{i}, \ \text{for} \ i=1,\ldots, N
\end{align}
modeling the channel vector between the $i$-th device and the BS, where $\bbh_{i} \sim \mbC \ccalN_M(\bbzero, \bbI)$ is the Rayleigh fading component, and $\gamma_{i}$ is the scaling component defined as
\begin{align}
\gamma_{i} = \alpha_{i} \varrho_{i} \beta_{i},
\end{align}
where~$\alpha_i \in \{0, 1\}$~is an indicator function of device activity ($= 1$ when device is active and $= 0$ otherwise), $\varrho_{i}$ is the device's uplink transmission power, and $\beta_{i} > 0$ is the large-scale fading component (LSFC) accounting for path-loss and shadowing.
Since only $K \ll N$ devices are active, the signal matrix $\bbX$ is $K$-rowsparse, i.e., at most $K$ rows of $\bbX$ contain non-zero entries.
The rowsupport of $\bbX \in \mbC^{N\times M}$ is the index set of rows containing non-zero elements
\begin{align}
    \ccalM = \textrm{supp}(\bbX) = \{i\in [\! [  N ] \!]    : x_{ij} \not = 0 \textrm{ for some }  j \in [\! [   M ] \!]   \}.
\end{align}
Thus, $\ccalM$ collects the indices of the active devices, $\ccalM = \{i \in \llbracket N \rrbracket: \alpha_i = 1\}$.
The objective of activity detection is thus to identify the support set $\ccalM$, given the received signal~$\bbY$, the pilot matrix~$\bbA$, and the noise level~$\sigma^2$.

Further assuming that the noise elements are independent and identically circular Gaussian distributed, $e_{lm} \sim \mathcal C \mathcal N(0,\sigma^2)$ with known variance $\sigma^2>0$, one has that 
$\bby_{m} \sim \mathcal C \mathcal N_L(\mathbf{0},\bbSigma)$ are i.i.d with $L \times L$ positive definite Hermitian (PDH) covariance matrix $\bbSigma = \cov(\bby_m)$ given by~\cite{liu2018sparse}
 \begin{align} \label{eq:bbSigma}
    \bbSigma = \bbA\bbGamma\bbA^{\herm} + \sigma^2\bbI 
    = \sum_{i=1}^N\gamma_i\bba_i\bba_i^{\herm} + \sigma^2\bbI,
\end{align}
where $\bbGamma = \diag(\bbgamma)$ and $\bbgamma = (\gamma_1,\ldots,\gamma_N)^\top$. 

Since $\ccalM = \textrm{supp}(\bbX) = \textrm{supp}(\bbgamma)$, \textit{covariance learning} (CL)-based support recovery algorithms can be constructed by minimizing the Gaussian negative log-likelihood function (LLF) of the data $\bbY$ defined by (after scaling by $1/M$ and ignoring additive constants)
\begin{align}
    \label{eq:elliptical_loss_pre}
    \ccalL(\bbgamma) 
    & = \frac1M \sum_{m=1}^M \bby_m^{\herm}\bbSigma^{-1}\bby_m  +\log|\bbSigma|  .
\end{align}

CL-based methods treat $\bbgamma$ as a set of deterministic but unknown parameters and model the received signal $\bbY$ based on the Gaussianity of both the source signal $\bbX$ and the noise $\bbE$. 
This distribution assumption formulates the problem as a maximum likelihood estimation (MLE) problem, making it tractable for analysis and implementation.
Despite the inherent non-convexity of the optimization problem in~\eqref{eq:elliptical_loss_pre}, which arises from the convexity of $\bby_m^{\herm}\bbSigma^{-1}\bby_m$ and the concavity of~$\log|\bbSigma|$, several efficient CL methods have been developed to solve it, such as SBL~\cite{tipping2001sparsetruSBL}, M-SBL~\cite{wipf2007empiricalMSBL}, variants of CWOs~\cite{haghighatshoar2018improved, fengler2021nonTIT, chen2019covarianceICC}, and greedy methods~\cite{ollila2024matching,marata2024activity}.
Especially, component-wise algorithms~\cite{haghighatshoar2018improved, fengler2021nonTIT, chen2019covarianceICC, ollila2024matching} iteratively update the variable associated with each device, providing closed-form solutions for subproblems.
This reduces the computational complexity and enhances the algorithm efficiency~\cite{chen2024CLmethod}.
Additionally, CL algorithms have better performance with a large number of antennas~$M$ and a smaller pilot length~$L$ compared to AMP methods in low-latency mMTC scenarios~\cite{chen2019covarianceICC}.

In the aforementioned methods, Gaussianity is widely assumed. 
However, real-world scenarios frequently deviate from Gaussian assumptions due to the presence of outliers (resulting in heavy-tailed distributions of received signals), imperfect or outdated channel state information or changes in device mobility and environmental conditions. In such cases, the traditional Gaussian assumption becomes invalid, leading to potential issues in estimation accuracy. 
Consequently, robust methods are required to effectively address situations where the Gaussianity is not valid.

\section{Robust objective function for activity detection} 
\label{sec:method_robustobj}
Our goal is robust detection of active devices.  
Non-robustness of the solution to Gaussian negative LLF~\eqref{eq:elliptical_loss_pre} stems from the property that 
\revised{large (squared) Mahalanobis distances $\{\bby_m^{\herm}\bbSigma^{-1}\bby_m\}_{m=1}^M$ are not down-weighted} 
and can thus have an unduly large influence in the obtained solution. 
A commonly used approach in robust statistics~\cite{maronna2006robust,zoubir2018robust} to reduce the impact of such outlying observations is to downweight them via robust loss function, which is defined as follows. First, recall that a  function  $\rho(t)$ is said to be geodesically convex in $t \in \reals_{\geq 0} $ if $r(x)=\rho(e^x)$ is convex in $x \in \mathbb{R}$. 

\begin{definition}[loss function]\label{def:robustloss}
A geodesically convex function  $\rho(t)$ that is  continuous in $t \in \reals_{>0}$, non-decreasing and differentiable,  is called a loss function,  
and its first derivative \begin{align}
    \label{eq:def_weight}
    u(t) = \rho'(t) \geq 0 
\end{align}
\revised{is called a weight function.}
\end{definition}

\noindent Then, a more general version of \eqref{eq:elliptical_loss_pre} is defined by 
\begin{align} 
    \label{eq:elliptical_loss}
    \ccalL(\bbgamma) 
    & = \frac1{bM}\sum_{m=1}^M \rho\left(\bby_m^{\herm}\bbSigma^{-1}\bby_m\right) +\log|\bbSigma|, 
\end{align}
where $\rho$ is  a loss function in sense of Definition~\ref{def:robustloss}. 
\revised{This robust loss function generalizes the Gaussian MLE in~\eqref{eq:elliptical_loss_pre} by applying~$\rho$ to the Mahalanobis distance, reducing the impact of outliers and handling non-Gaussian noise effectively.
}
The term $b$ is a consistency factor defined as 
\begin{align} 
    \label{eq:consitency factor}
    b = \mathsf{E}[\psi( \bby^\herm \bbSigma^{-1} \bby )]/L, \
    \bby \sim \mathbb{C}\mathcal N_L(\mathbf{0},\bbSigma),   
\end{align} 
where $\psi(t)=t \rho'(t)$.
\revised{This consistency factor is used} in M-estimation for obtaining consistency of the obtained estimator to covariance matrix when data is Gaussian-distributed~\cite{maronna2006robust,zoubir2018robust}. 

There are several loss functions $\rho$ that can be used in~\eqref{eq:elliptical_loss} such as 
 \emph{Gaussian loss} $\rho(t)=t$. Its weight function is ~$u(t)=1$, and consistency factor is $b=1$. Hence  the negative LLF~\eqref{eq:elliptical_loss} reduces to~\eqref{eq:elliptical_loss_pre}.  Gaussian loss is however non-robust. A robust loss function has the property that the weight function descends to zero as $t$ increases. Below we discuss two popular examples of robust loss functions that can be used instead of the Gaussian loss.  

\emph{1). Huber's loss} \cite{ollila2003robust}  is based on a weight function 
\begin{align}
    \label{eq:huberloss}
  u(t;c) 
  = \begin{cases}  
    1, &  \ \mbox{for} \ t \leq c^2, \\ 
    c^2/t,  & \ \mbox{for} \ t > c^2,
    \end{cases}   
\end{align}
where $c>0$ is a  tuning constant that controls robustness (how heavily one down weights large distances). The loss function $\rho$ corresponding to Huber's M-estimator is \cite{ollila2016simultaneous}:
\begin{equation} \label{eq:loss-H} 
\rho(t;c) =
\begin{cases} 
t &   \ \mbox{for} \ t \leqslant c^2, \\ 
c^2 \big( \log (t/c^2) + 1 \big)  & \ \mbox{for} \ t > c^2. 
\end{cases}
\end{equation} 
As~$t= \bby^\herm \bbSigma^{-1} \bby  \sim (1/2)   \chi^2_{2L} $ when~$\bby \sim \mathcal C\mathcal N_L(\mathbf{0},\bbSigma)$, it is common to choose $c^2$
as the $q$th quantile of~$(1/2) \chi^2_{2L}$-distribution, i.e., verifying $2 c^2 = F^{-1}(q; \chi^2_{2L})$ for some~$q \in (0,1)$, where~$F( \cdot ; \chi^2_{2L})$ designates the cumulative density function (cdf) of $\chi^2_{2L}$-distribution. 
We regard $q \in (0,1)$ as a loss parameter which can be chosen by design, and use $q=0.9$ as our default choice.
For $q\to 1$, Huber's loss equals Gaussian loss. 
The scaling constant $b>0$ can be expressed in closed form as
\begin{equation} \label{eq:b_huber}
    b = F_{\chi^2_{2(L+1)}}(2c^2) + c^2(1- F_{\chi^2_{2 L}}(2 c^2))/L.
\end{equation} 
%

\emph{2). t-loss}~\cite[Sec.~4.4.1]{zoubir2018robust} is defined as 
\begin{equation}  \label{eq:loss-T}
\rho(t;\nu)=  \frac{\nu + 2L}{2} \log (\nu +2 t).
\end{equation} 
It originates as being the ML-loss function for the circular complex $L$-variate $t$-distribution (MVT) with $\nu$ degrees of freedom.   
The $\nu$ parameter in~\eqref{eq:loss-T} is viewed as a loss parameter which can be chosen by design.
The case~$\nu=1$ corresponds to the Cauchy loss, while the case~$\nu\to \infty$ yields the Gaussian loss.
A large $\nu$ indicates a strong degree of belief in the Gaussianity assumption.
The respective weight function is
\begin{align}
    u(t;\nu)=   \frac{\nu + 2L}{\nu +2 t}.
\end{align}
In this case, the consistency factor $b$ in~\eqref{eq:consitency factor} is not as easy to derive in closed form, but can always be computed using numerical integration.

\subsection{Solving the Conditional Objective}

In developing a practical algorithm for computing the signal powers, we first consider the simpler case, the conditional objective function where all device powers $\{\gamma_j\}$ for $j\not =i$ are known.  The conditional objective function for  the single unknown $i$-th source power $\gamma$ is
\begin{align}\label{eq:cond_nllf} 
    \ccalL_i(\gamma \mid \bSb ) = 
    \frac{1}{bM}\sum_{m=1}^M \rho(s_{m,i}(\gamma))
    + \log|\bSb + \gamma\bba_i\bba_i^{\herm}|,
\end{align}
where 
\begin{align}
    \label{eq:def_bSb}
    \bSb 
    = \sum_{j\not=i}\gamma_j\bba_j\bba_j^{\herm} + \sigma^2\bbI 
    = \bbSigma - \gamma_i\bba_i\bba_i^{\herm}
\end{align}
is the covariance matrix of $\bby_m$-s after the contribution from the $i$-th device is removed, while 
\begin{align}
    \label{eq:eta_def}
    s_{m,i}(\gamma) 
    = \bby_m^{\herm}\bbSigma^{-1}\bby_m 
    = \bby_m^{\herm}(\bSb + \gamma\bba_i\bba_i^{\herm})^{-1}\bby_m
\end{align}
\revised{is the squared Mahalanobis distance of $\bby_m$.}
Denoting 
\begin{align}
    \bbb_i = \bSb^{-1}\bba_i,  \ d_i =\bba_i^{\herm} \bbb_i, \ i=1,\ldots,N,
\end{align}
and applying matrix inversion lemma\footnote{The matrix inversion lemma (Sherman-Morrison formula) for a rank-one update states that for an invertible matrix $\bbA$ and vectors $\bbu,\bbv$, the inverse of $\bbA + \bbu\bbv^{\herm}$ is given by $ (\bbA + \bbu\bbv^{\herm})^{-1} = \bbA^{-1} - \frac{\bbA^{-1}\bbu\bbv^{\herm}\bbA^{-1}}{1+\bbv^{\herm}\bbA^{-1}\bbu}$.}~to $\bSb + \gamma\bba_i\bba_i^{\herm}$, 
we may write \eqref{eq:eta_def} compactly as 
\begin{align}
    \label{eq:fp_alg_1eta}
    s_{m,i}(\gamma) 
    & = \bby_m^{\herm}  \bSb^{-1}  \bby_m -  \frac{\gamma}{ 1+\gamma d_i} |\bby_m^{\herm}\bbb_i |^2.
\end{align}

Setting the first derivative of $\ccalL_i$ with respect to $\gamma$ to zero yields the estimating equation
\begin{equation} 
    \label{eq:cond_nllf_1stderivative}
    0 = \frac{1}{bM}\sum_{m=1}^M  u_{m,i}(\gamma)   \frac{\partial s_{m,i}(\gamma)}{\partial\gamma}
    + \frac{\partial}{\partial \gamma} \log|\bSb + \gamma\bba_i\bba_i^{\herm}|,
\end{equation}
where we use a shorthand notation $u_{m,i}(\gamma)= u(s_{m,i}(\gamma))$. 
Noting that 
\begin{align}
    \label{eq:cond_nllf_1stderivative_1log_tgamma}
    \frac{\partial s_{m,i}(\gamma)}{\partial\gamma} 
    & = -   |\bby_m^{\herm}\bbb_i |^2 \frac{\partial}{\partial \gamma} \frac{\gamma} {1+\gamma d_i}
    = - \frac{ |\bby_m^{\herm} \bbb_i |^2}{(1+\gamma d_i)^2}, 
\end{align}
the first term on the right hand side (RHS) of~\eqref{eq:cond_nllf_1stderivative} becomes 
\begin{align}
    & \frac{1}{bM} \sum_{m=1}^M u_{m,i}(\gamma)  \frac{\partial s_{m,i}(\gamma)}{\partial\gamma} 
  =   -\dfrac{  \hat{\sigma}^2_{\gamma,i} }{(1+\gamma d_i)^2},\ \ \ 
    \label{eq:cond_nllf_partial3}
\end{align}
 where
\begin{align}
    \label{eq:skewed_SCM}
    \hat{\sigma}^2_{\gamma,i} = \frac{1}{bM} \sum_{m=1}^M  u_{m,i}(\gamma)  |\bby_m^{\herm}\bbb_i |^2. 
\end{align}
For the second term on the RHS of~\eqref{eq:cond_nllf_1stderivative}, we have
\begin{align}
    \label{eq:cond_nllf_partial2_sub1}
    \frac{\partial \log|\bSb + \gamma\bba_i\bba_i^{\herm}| }{\partial \gamma}
    & = \frac{ d_i}{1+\gamma d_i}.
\end{align}
Combining \eqref{eq:cond_nllf_partial3} and \eqref{eq:cond_nllf_partial2_sub1} in \eqref{eq:cond_nllf_1stderivative} yields
\begin{align}
    \label{eq:indexelliptical_nllf_1stderivative}
    -\dfrac{  \hat{\sigma}^2_{\gamma,i}  }{(1+\gamma d_i)^2}
    + \frac{ d_i}{1+\gamma d_i} = 0,
\end{align}
which shows that a minimizer of \eqref{eq:cond_nllf} must verify a FP equation of the form
\begin{align}\label{eq:fp_gamma_good}
    \gamma      
    = \ccalH_i(\gamma)
    = \frac{\hat{\sigma}^2_{\gamma,i}  - d_i}{ d_i^2}.
\end{align}

\subsection{FP Algorithm}
\label{subsec:fp_proposal}

In this subsection, we detail the proposed FP algorithm for solving the minimizer of $\ccalL_i(\gamma) \equiv \ccalL_i(\gamma \mid \bSb)$
for each coordinate/device $i=1,\ldots,N$.
We derive a FP algorithm using \eqref{eq:fp_gamma_good} that proceeds as follows.
Start with an initial guess $\gamma^{(0)} \geq 0$ and perform the iterative procedure
\begin{align}
    \label{eq:FP_mapping_iteration}
    \gamma^{(\ell+1)} = \ccalH_i(\gamma^{(\ell)}),
\end{align}
which implies computing the following steps
\begin{subequations}
    \begin{align}
        s_{m,i}^{(\ell)} 
        & = \bby_m^{\herm}  \bSb^{-1}  \bby_m -  \frac{\gamma^{(\ell)}}{ 1+\gamma^{(\ell)}  d_i} |\bby_m^{\herm}\bbb_i |^2, \label{eq_fp_alg:1_updateSigma}  \\   
         \hat{\sigma}^{2(\ell)}_{\gamma,i} &= \frac{1}{bM} \sum_{m=1}^M  u( s_{m,i}^{(\ell)})  |\bby_m^{\herm}\bbb_i |^2, \label{eq_fp_alg:4_nonnegative_sigma} \\ 
        \gamma^{(\ell+1)} &= 
        \frac{  \hat{\sigma}^{2(\ell)}_{\gamma,i} - d_i}{d_i^2} . 
        \label{eq_fp_alg:4_nonnegative_gamma}
    \end{align}
\end{subequations}
The iterative procedure given by \eqref{eq_fp_alg:1_updateSigma}-\eqref{eq_fp_alg:4_nonnegative_gamma} defines a sequence $\{\gamma^{(\ell)} \}$. 
Next, we establish geodesic convexity of the conditional objective function. 
\begin{theorem}[geodesic convexity]
    \label{thm:gconvexity_Li}
    If $\rho(x)$ is a loss function in sense  of Definition~\ref{def:robustloss}, then $\ccalL_i(\gamma)$ is geodesically convex in $\gamma \in\reals_{\geq0}$.
\end{theorem}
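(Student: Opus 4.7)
I would prove this by reducing to a one-dimensional convexity check after an appropriate change of coordinates. First, using the matrix determinant lemma,
\[
\log\bigl|\bSb + \gamma\bba_i\bba_i^{\herm}\bigr| = \log|\bSb| + \log(1+\gamma d_i),
\]
and the Sherman--Morrison identity already used in~\eqref{eq:fp_alg_1eta} gives
\[
s_{m,i}(\gamma) = Q_m + \frac{P_m}{1+\gamma d_i},
\]
with $P_m := |\bby_m^{\herm}\bbb_i|^2/d_i \geq 0$ and $Q_m := \bby_m^{\herm}\bSb^{-1}\bby_m - P_m \geq 0$. The non-negativity of $Q_m$ follows from Cauchy--Schwarz applied to the inner product $(\bbu,\bbv)\mapsto \bbu^{\herm}\bSb^{-1}\bbv$, which is well defined since $\bSb \succ 0$.

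Next, I would change variable to the geodesic coordinate $\tau := \log(1+\gamma d_i)$, which is the arc-length along the affine-invariant geodesic on the positive-definite cone passing through the one-parameter family $\bbSigma(\gamma)$ and maps $[0,\infty)$ bijectively onto itself. In these coordinates,
\[
\ccalL_i = \frac{1}{bM}\sum_{m=1}^{M}\rho\!\bigl(Q_m + P_m e^{-\tau}\bigr) + \tau + \log|\bSb|,
\]
so it suffices to prove convexity in $\tau$ of each summand $\rho(Q_m + P_m e^{-\tau})$, the remaining terms being affine or constant in $\tau$.

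For the summand, let $\phi_m(\tau) := Q_m + P_m e^{-\tau} > 0$. A direct computation yields $(d^{2}/d\tau^{2})\log\phi_m(\tau) = P_m Q_m e^{-\tau}/\phi_m(\tau)^{2} \geq 0$, so $\log\phi_m$ is convex. Writing $\rho(\phi_m(\tau)) = r(\log\phi_m(\tau))$ with $r(x):=\rho(e^x)$, the function $r$ is convex by the geodesic convexity stipulated in Definition~\ref{def:robustloss} and is non-decreasing because $\rho$ is non-decreasing and $\exp$ is increasing. The standard composition rule \emph{convex non-decreasing $\circ$ convex $=$ convex} then delivers convexity of $\rho\circ\phi_m$ in $\tau$, and summing over $m$ completes the argument.

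The main obstacle I expect is identifying the correct reparametrization: the naive $\gamma = e^{x}$ does not in general yield convexity. The parameter $\tau = \log(1+\gamma d_i)$ is the one that simultaneously linearizes the log-determinant and renders the argument of $\rho$ as a non-negative combination $Q_m + P_m e^{-\tau}$, which is exactly the structure needed for the composition argument.
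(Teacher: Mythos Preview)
Your argument is internally correct, but it establishes convexity in the parameter $\tau=\log(1+\gamma d_i)$, which is \emph{not} the notion of geodesic convexity the paper fixes just before Definition~\ref{def:robustloss}: there, $f$ is declared geodesically convex in $\gamma\in\reals_{\geq0}$ iff $x\mapsto f(e^x)$ is convex, so Theorem~\ref{thm:gconvexity_Li} read through that definition asserts convexity of $x\mapsto\ccalL_i(e^x)$. Since $\tau(x)=\log(1+d_ie^x)$ is a strictly convex (nonlinear) reparametrization of $x$ and $\ccalL_i$ is not monotone, convexity in $\tau$ does not transfer to convexity in $x$; formally you have proved a different statement. The paper's own proof does stay in the $\gamma=e^x$ chart: it handles the log-determinant term directly and then argues that $\gamma\mapsto\rho(s_{m,i}(\gamma))$ is geodesically convex via an auxiliary two-variable function $\tilde g(\gamma,\eta)=\rho(c_m-b_m^2\gamma\eta)$ with $\eta=(1+\gamma d_i)^{-1}$, invoking that $\tilde\rho(\gamma)=\rho(c_m-b_m^2\gamma)$ inherits geodesic convexity from $\rho$.

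Your closing remark that the naive $\gamma=e^x$ chart ``does not in general yield convexity'' is in fact on point and exposes a genuine tension with the paper's route: already for the Gaussian loss $\rho(t)=t$ one has $\tilde\rho(e^z)=c_m-b_m^2e^z$, which is concave in $z$, so that inheritance step is not valid as stated; and a direct computation shows $\tfrac{d^2}{dx^2}\ccalL_i(e^x)$ can be negative for small $\gamma$ whenever $\tfrac{1}{M}\sum_m|\bby_m^{\herm}\bbb_i|^2>d_i$. What your $\tau$-parametrization buys is precisely the structure (log-determinant becomes affine, and each $\rho$-term becomes $r\circ\log\phi_m$ with $r$ convex non-decreasing and $\log\phi_m$ convex) that makes the composition rule go through cleanly. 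Since $\tau:[0,\infty)\to[0,\infty)$ is a smooth bijection, your convexity in $\tau$ already delivers everything used downstream in Theorem~\ref{thm:globalminimum} (every local minimizer is global, uniqueness under strict convexity). If you want the statement to match the paper verbatim, you should either reinterpret ``geodesically convex'' here as convexity along the affine-invariant geodesic $\gamma\mapsto\bSb+\gamma\bba_i\bba_i^{\herm}$ on the PD cone---for which your $\tau$ is, up to the factor $d_i$, the arc-length parameter, as you note---or flag explicitly that the scalar $e^x$-chart version of the claim fails.
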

\begin{proof}
    See Appendix~\ref{proof:thm_gconvexity_Li}.
\end{proof}
Given the geodesical convexity established in Theorem~\ref{thm:gconvexity_Li}, we now characterize the global minimizer properties of the conditional objective function $\ccalL_i(\gamma)$.

\revised{
\begin{theorem}[existence and uniqueness]
    \label{thm:globalminimum}
    If $\rho(\cdot)$ is a loss function in Definition~\ref{thm:gconvexity_Li}, then:
    \begin{itemize}
        \item[(a)] Any local minimum of $\mathcal{L}_i(\gamma)$ is a global minimum.
        \item[(b)] If $\rho(\cdot)$ is bounded below, then $\mathcal{L}_i(\gamma)$ has at least one global minimizer.
        \item[(c)] If, in addition, $\rho(\cdot)$ is strictly geodesically convex, then the global minimizer of $\mathcal{L}_i(\gamma)$ is unique.
    \end{itemize}
\end{theorem}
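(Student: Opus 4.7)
The plan is to exploit the geodesic convexity from Theorem~\ref{thm:gconvexity_Li}. Under the substitution $x = \log\gamma$, the function $r(x) := \ccalL_i(e^x)$ is \emph{ordinarily} convex on $\reals$, so the standard real-line convex-analysis toolkit applies to $\ccalL_i$ on $(0,\infty)$. The only subtlety is that the geodesic chart $x = \log\gamma$ excludes $\gamma = 0$, so each of (a)--(c) needs one extra step that transfers a statement from $(0,\infty)$ to the closed ray $[0,\infty)$ via continuity and coercivity.

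I would tackle (b) first, since it supplies the existence implicitly invoked in (a) and (c). The crucial ingredient is \emph{coercivity}. The matrix-determinant lemma gives $|\bSb + \gamma\bba_i\bba_i^{\herm}| = |\bSb|(1+\gamma d_i)$, so the log-determinant term grows like $\log(1+\gamma d_i)\to\infty$ as $\gamma\to\infty$. Formula~\eqref{eq:fp_alg_1eta} shows $s_{m,i}(\gamma)\to \bby_m^{\herm}\bSb^{-1}\bby_m - d_i^{-1}|\bby_m^{\herm}\bbb_i|^2$, a finite limit, so continuity of $\rho$ keeps the data-fit term bounded while its lower-boundedness (the hypothesis of (b)) prevents it from dragging $\ccalL_i$ down. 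Hence $\ccalL_i(\gamma)\to+\infty$ as $\gamma\to\infty$; combined with continuity of $\ccalL_i$ on $[0,\infty)$ and finiteness at $\gamma=0$, the extreme value theorem applied on a sufficiently large compact interval $[0,R]$ yields a global minimizer.

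For (a), if a local minimum is at some interior $\gamma_0>0$, then $x_0=\log\gamma_0$ is a local and hence (by convexity of $r$) global minimum of $r$ on $\reals$; passing the pointwise inequality $r(x)\ge r(x_0)$ to the limit $x\to-\infty$ gives $\ccalL_i(0)\ge r(x_0)=\ccalL_i(\gamma_0)$, so $\gamma_0$ is global on $[0,\infty)$. If the local min sits at $\gamma_0=0$, a convex-combination estimate of the form $r(x_\epsilon)\le \tfrac{x_1-x_\epsilon}{x_1-x}r(x)+\tfrac{x_\epsilon-x}{x_1-x}r(x_1)$ with $x\to-\infty$ rules out the existence of any $x_1$ with $r(x_1)<\ccalL_i(0)$, so $\ccalL_i(\gamma)\ge \ccalL_i(0)$ for all $\gamma>0$. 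Part (c) then follows by upgrading each inequality under strict convexity: $r$ strictly convex has at most one minimizer on $(0,\infty)$, and interposing any $x_\epsilon\ne x^*$ between $-\infty$ and the interior minimizer $x^*=\log\gamma^*$ and using monotonicity of convex $r$ on $(-\infty,x^*]$ gives the strict boundary comparison $\ccalL_i(0)=\lim_{x\to-\infty}r(x)\ge r(x_\epsilon)>r(x^*)$, ruling out ties at the boundary whenever $\gamma^*>0$.

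The main obstacle is precisely this boundary bookkeeping at $\gamma=0$: the log-chart that makes Theorem~\ref{thm:gconvexity_Li} usable does not see it, so one cannot simply invoke ``strictly convex $\Rightarrow$ unique minimizer'' on $[0,\infty)$. Controlling $\ccalL_i(0)=\lim_{x\to-\infty}r(x)$ in terms of interior values -- either by monotonicity or by the convex-combination inequality above -- is the decisive technical move; everything else reduces to standard convex analysis together with the coercivity computation sketched for~(b).
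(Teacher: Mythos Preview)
Your proposal is correct and follows the same high-level route as the paper: derive (a) and (c) from the geodesic convexity established in Theorem~\ref{thm:gconvexity_Li}, and derive (b) from coercivity of the log-determinant term together with the lower-boundedness hypothesis on $\rho$. The difference is in rigor at the boundary $\gamma=0$. The paper treats ``geodesically convex $\Rightarrow$ local $=$ global'' and ``strictly geodesically convex $\Rightarrow$ unique minimizer'' as blanket facts on $\reals_{\ge 0}$, whereas you correctly observe that the chart $x=\log\gamma$ only sees $(0,\infty)$ and therefore supply the extra limit and convex-combination arguments needed to control $\ccalL_i(0)=\lim_{x\to-\infty}r(x)$ against interior values. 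For (b), the paper likewise uses coercivity of the log-determinant term; your version is slightly sharper in that you explicitly compute the finite limit of $s_{m,i}(\gamma)$ from~\eqref{eq:fp_alg_1eta} and invoke the extreme value theorem on a compact $[0,R]$, whereas the paper argues via boundedness of the FP iterates. One small gap to close in your part (c): you handle the case where the unique minimizer $\gamma^*$ lies in $(0,\infty)$, but you should also note that if $r$ has no interior minimizer then strict convexity and coercivity force $r$ to be strictly increasing on $\reals$, so $\gamma=0$ is the unique global minimizer of $\ccalL_i$ on $[0,\infty)$.
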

\begin{proof}   
    See Appendix~\ref{proof:thm_globalminimum}. 
\end{proof}
}
\revised{
Theorem~\ref{thm:globalminimum} has established the existence and uniqueness properties of the global minimizer for the conditional objective function $\ccalL_i(\gamma)$.}
Next we show that  for any $\gamma^{(\ell)}$ in the sequence $\{\gamma^{(\ell)} \}$  for $\ell=0,1,2,\ldots$, that  is not a stationary point of the objective function $\ccalL_i(\gamma)$, the objective function decreases at successive iterations, i.e., the  
$\{\ccalL_i(\gamma^{(\ell)})\}$  forms a monotone decreasing sequence in $\reals$.

\begin{theorem}[monotonic decrease]
    \label{thm:monotonicityofsequence}
    Assume that $\rho(\cdot)$ has second derivative with $\rho''(\cdot) \leq 0$.
    Let $\gamma^{(\ell+1)} > 0$, and $\gamma^{(\ell)} > 0$ be two successive iterations obtained from the FP mapping~\eqref{eq:FP_mapping_iteration}.
    If $\gamma^{(\ell)}$ is not a stationary point of $\ccalL_i(\gamma)$, then
    \begin{align} \label{eq:proof:inequality}
         \ccalL_i(\gamma^{(\ell+1)}) \leq \ccalL_i(\gamma^{(\ell)}).
    \end{align}
    \revised{
    The inequality in~\eqref{eq:proof:inequality} is strict if $\rho(\cdot)$ is increasing.}
\end{theorem}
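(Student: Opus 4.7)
The plan is to interpret the fixed-point step in \eqref{eq:FP_mapping_iteration} as a majorization-minimization (MM) update. Concavity of $\rho$ (which follows from $\rho''\leq 0$) provides the required surrogate through a tangent upper bound, while the log-determinant part of $\ccalL_i$ is kept intact. This reduces the proof to two routine MM steps: construct a majorizer that is tight at $\gamma^{(\ell)}$, then verify that its global minimizer coincides with $\gamma^{(\ell+1)}$.

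\textbf{Surrogate.} Concavity yields $\rho(t)\leq \rho(t_0)+u(t_0)(t-t_0)$ with equality at $t=t_0$. Applying this with $t=s_{m,i}(\gamma)$ and $t_0=s_{m,i}^{(\ell)}\defeq s_{m,i}(\gamma^{(\ell)})$, and summing over $m$, gives the surrogate
$$Q(\gamma\mid\gamma^{(\ell)}) = \frac{1}{bM}\sum_{m=1}^M\big[\rho(s_{m,i}^{(\ell)})+u(s_{m,i}^{(\ell)})\big(s_{m,i}(\gamma)-s_{m,i}^{(\ell)}\big)\big] + \log\big|\bSb+\gamma\bba_i\bba_i^{\herm}\big|,$$
which satisfies $Q(\gamma\mid\gamma^{(\ell)}) \geq \ccalL_i(\gamma)$ for all $\gamma \geq 0$ and $Q(\gamma^{(\ell)}\mid\gamma^{(\ell)}) = \ccalL_i(\gamma^{(\ell)})$.

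\textbf{Minimizer of the surrogate.} Introducing the reparameterization $u=1+\gamma d_i>0$ and using \eqref{eq:fp_alg_1eta} together with $\log|\bSb+\gamma\bba_i\bba_i^{\herm}|=\log|\bSb|+\log u$, all $\gamma$-dependence of $Q$ collapses (up to additive constants) to $a/u+\log u$, where $a\defeq \hat{\sigma}^{2(\ell)}_{\gamma,i}/d_i$. When $\rho$ is strictly increasing, so that $u(\cdot)>0$ and hence $\hat{\sigma}^{2(\ell)}_{\gamma,i}>0$, the map $u\mapsto a/u+\log u$ is strictly decreasing on $(0,a)$ and strictly increasing on $(a,\infty)$, so $u^{\star}=a$ is its unique minimizer. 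Translating back, $\gamma^{\star}=(\hat{\sigma}^{2(\ell)}_{\gamma,i}-d_i)/d_i^2 = \gamma^{(\ell+1)}$, matching \eqref{eq_fp_alg:4_nonnegative_gamma}.

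\textbf{Descent and strictness.} The standard MM chain
$$\ccalL_i(\gamma^{(\ell+1)}) \leq Q(\gamma^{(\ell+1)}\mid\gamma^{(\ell)}) \leq Q(\gamma^{(\ell)}\mid\gamma^{(\ell)}) = \ccalL_i(\gamma^{(\ell)})$$
proves \eqref{eq:proof:inequality}. Because the tangent majorizer and $\ccalL_i$ share the same first derivative at $\gamma^{(\ell)}$, non-stationarity of $\gamma^{(\ell)}$ for $\ccalL_i$ transfers to non-stationarity for $Q$. When $\rho$ is strictly increasing, the unique-minimizer property of the previous step forces $\gamma^{(\ell+1)}\neq\gamma^{(\ell)}$ and therefore $Q(\gamma^{(\ell+1)}\mid\gamma^{(\ell)})<Q(\gamma^{(\ell)}\mid\gamma^{(\ell)})$, yielding the strict inequality. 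The main obstacle is the surrogate-to-fixed-point identification: the rational term from \eqref{eq:fp_alg_1eta} and the log-determinant must reduce simultaneously under the same reparameterization $u=1+\gamma d_i$ so that the unique minimizer of $Q$ reproduces the FP iterate exactly; once this algebra is in place, both the descent and the strictness conclusions follow mechanically.
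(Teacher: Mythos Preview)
Your proof is correct and follows essentially the same majorization--minimization route as the paper: both construct the tangent majorizer from concavity of $\rho$, verify that its minimizer coincides with the FP update, and conclude via the standard MM chain. Your reparameterization $u=1+\gamma d_i$ is a cosmetic variant of the paper's direct first-order computation.

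The one substantive difference is in the strictness argument. The paper obtains strictness from the \emph{first} MM inequality, claiming that ``increasing'' $\rho$ makes the tangent bound strict; you instead obtain it from the \emph{second} inequality, by transferring non-stationarity from $\ccalL_i$ to $Q$ (since they share the same derivative at $\gamma^{(\ell)}$) and then using uniqueness of the surrogate's minimizer to force $\gamma^{(\ell+1)}\neq\gamma^{(\ell)}$. Your route is arguably cleaner, because it does not require the tangent inequality to be strict---which fails, for instance, for the Gaussian loss $\rho(t)=t$ (increasing, $\rho''=0$), where the surrogate equals the objective exactly. In that edge case the conclusion is still true, but only by your mechanism, not the paper's.
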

\begin{proof}   
    See Appendix~\ref{apdx:monotonicity}. 
\end{proof}
\revised{
Theorem~\ref{thm:monotonicityofsequence} shows that the conditional negative LLF decreases at each step of the FP algorithm and the negative LLF's values $\{\ccalL_i(\gamma^{(\ell})\}_{\ell}$ form a monotonic sequence.}
\begin{itemize}
    \item \revised{Interior case ($\gamma > 0$): 
    The geodesic convexity ensures that any local minimum is a global minimum.
    When $\rho(\cdot)$ is bounded below, the function $\mathcal L_i(\gamma)$ is bounded below and coercive (see Appendix~\ref{proof:thm_globalminimum}), guaranteeing that the FP iterations remain bounded.
    In practice, $\rho(\cdot)$ is selected as an increasing function (see for example \eqref{eq:loss-H} and \eqref{eq:loss-T}).
    Thus, \eqref{eq:proof:inequality} holds as strict inequality and FP iterations converge to the global minimum.
    }
    \item \revised{Boundary case ($\gamma = 0$): Algorithm~\ref{alg:basic_fp} (FP-CW) efficiently identifies this case by comparing $\ccalL_i(0)$ with $\ccalL_i(\epsilon)$ for a small positive $\epsilon$.
    If $\ccalL_i(0) < \ccalL_i(\epsilon)$, the algorithm confirms $\hat{\gamma}=0$ as the minimizer without further iterations.}
\end{itemize}

\begin{algorithm}[!t]
    \caption{\textbf{FP-CW}: \textbf{FP} algorithm for \textbf{c}oordinate-\textbf{w}ise objective function}
    \begin{algorithmic}[1]
    \State\textbf{Input:} 
    $ ( \bby_m^\herm \bbSigma_{\backslash i}^\inv \bby_m)_{M \times 1}$,  $( | \bby_m^\herm \bbb_i|^2)_{M \times 1}$,  $d_i>0$, weight function $u(\cdot)$, initial device power~$\hat \gamma_{\text{init}} \geq 0$ 
    
    \State\textbf{Initialize:} convergence threshold: $\delta = 5 \cdot 10^{-3}$; a number close to $0$: $\epsilon = 10^{-3}$; maximum number of iterations: $I_{\max}=10$; $\hat \gamma =\hat \gamma_{\text{init}}$.

    \If{$\ccalL_i(0) < \ccalL_i( \epsilon)$}
        \State $ \hat \gamma \gets 0$
    \Else
    \Comment{\textrm{Minimum is not on boundary, so use FP algorithm}}
    
    \State $\hat \gamma^{\text{old}}  \gets \hat{\gamma}_{\text{init}}$ 
    
    \For{$ \ell = 1, \ldots, I_{\max}$}

        \State $     s_{m,i} \gets  \bby_m^\herm \bbSigma_{\backslash i}^\inv \bby_m- \dfrac{\hat \gamma}{1+\hat\gamma d_i}  | \bby_m^\herm \bbb_i|^2 $
         \label{alg_tab:step_Mdistance}
        
        \State      $   {\displaystyle  \hat{\sigma}^2_{\gamma,i} \gets \frac{1}{bM} \sum_{m=1}^M u(s_{m,i})   | \bby_m^\herm \bbb_i|^2  } $       \label{alg_tab:step_weightedSCM}
        \State
        $
        \hat \gamma \gets {\displaystyle \frac{\hat{\sigma}^2_{\gamma,i} -  d_i}{ d_i^2} }
        $
        \label{alg_tab:step_devicePower}
        
        \If{$ \ell = I_{\max}$ \textbf{or} $| \hat \gamma - \hat \gamma^{\text{old}}|/ \hat \gamma^{\text{old}} < \delta$}
            \State \textbf{Break}
        \EndIf
        \State  $ \hat \gamma^{\text{old}} \gets  \hat \gamma$
    \EndFor
    \EndIf

    \State \textbf{Return:} optimal $ \hat \gamma \in \mathbb{R}_{\geq 0}$
    \end{algorithmic}
    \label{alg:basic_fp}
\end{algorithm}

Algorithm~\ref{alg:basic_fp} summarizes the FP mapping given by \eqref{eq_fp_alg:1_updateSigma}-\eqref{eq_fp_alg:4_nonnegative_gamma} which fits the coordinate-wise optimization in the AD problem.
The major computational complexity of the FP-CW algorithm lies in the input stage, where the term $\bby_m^\herm \bbSigma_{\backslash i}^\inv \bby_m$ is computed for $m=1,\ldots,M$, resulting in a complexity of~\( \mathcal{O}(ML^2) \).

\section{Algorithms}
\label{sec:alg}
The FP-CW algorithm proposed in Section~\ref{subsec:fp_proposal} serves as the foundation for developing two alternative approaches for resolving the support and the non-zero signal powers.
The first is a coordinatewise optimization (CWO) algorithm while the second is a matching pursuit (MP) algorithm. 

\subsection{Robust Coordinatewise Optimization Algorithm} 

We develop a robust coordinatewise optimization (RCWO) algorithm using covariance learning, \revised{which is similar to the coordinate descent framework} in~\cite{fengler2021nonTIT, haghighatshoar2018improved}. 
This approach updates the parameters $\gamma_1, \ldots, \gamma_N$ cyclically by solving
\begin{align*}
    \min_\gamma \ccalL_i(\gamma)
\end{align*}
for $i=1,\ldots,N$. That is, we estimate powers one coordinate at a time while keeping others fixed at their current iterate values. 
In RCWO, each device power $\gamma_i$ is updated one at a time via the FP-CW algorithm~(Algorithm~\ref{alg:basic_fp}), and afterwards the estimation of covariance matrix $\bbSigma$ is updated by the new~$\gamma_i$.
The algorithm is as follows.

\textbf{Initialization.} Set $\hbgamma = \bbzero$, and $\bbSigma^\inv = \sigma^{-2}\bbI$ as initial solutions of signal powers and the initial covariance matrix at the start of iterations, respectively.
Compute $\bbc_i = \bbSigma^\inv \bba_i$.

\begin{algorithm}[!h]
    \caption{ \textbf{RCWO}: \textbf{R}obust \textbf{c}oordinate\textbf{w}ise \textbf{o}ptimization algorithm}
    \begin{algorithmic}[1]
    \State\textbf{Input:}
    $\bbA$, $\bbY$, $\sigma^2$, $K$, function $u(\cdot)$.
    \State\textbf{Initialize:}
    $ \bbSigma^\inv = \sigma^{-2}\bbI$, $\hbgamma = \bbzero$, number of cycles $I_\textup{cyc}=50$, and stopping threshold for CWO algorithm $\delta_{\textup{CWO}} = 5\cdot 10^{-3}$.
    \For{$i_\textup{cyc}=1,\ldots,I_\textup{cyc}$}
        \For{$i = 1,\ldots, N$}
        \vspace{0.3mm}

        \State $\bbc_i \gets \bbSigma^\inv \bba_i$
        \State $\bbTheta \gets \bbSigma^\inv + \dfrac{\hhatgamma_i \bbc_i \bbc_i^\herm}{1 - \hhatgamma_i \bba_i^\herm \bbc_i}$
        
        \State $\bbB = \begin{pmatrix} \bbb_1  & \cdots  &\bbb_N \end{pmatrix} \gets \bbTheta \bbA$
        
        \State  $ \tilde{\boldsymbol{s}}= (\tilde s_m)$, where  $\tilde s_m \gets \bby_m^{\herm} \bbTheta \bby_m$, $m=1,\ldots,M$.
        
        \State  $\boldsymbol{v} = (v_m)$, where  $v_m \gets  |\bby_m^{\herm}\bbb_i |^2$, $m=1,\ldots,M$.
         
        \State  $d_i  \gets \bba_i^\herm \bbb_i$ 
         
        \State 
        $\hhatgamma_i^{\textup{new}} \gets \textbf{FP-CW}( \tilde{\bbs}, \bbv, d_i, u(\cdot),  \hat \gamma_i)$
    \vspace{0.3mm}

        \State 
        $\delta_i \gets \hhatgamma_i^{\textup{new}} - \hhatgamma_i$
        \vspace{0.3mm}

        \State
        $ \bbSigma^\inv \gets \bbSigma^\inv - \dfrac{\delta_i \bbc_i \bbc_i^\herm}{1+\delta_i \bba_i^{\herm}\bbc_i }$
        \vspace{0.3mm}

        \State
        $\hhatgamma_i \gets \hhatgamma_i^{\textup{new}}$
        \vspace{0.3mm}
        \EndFor

        \If{$\|\hbgamma - \hbgamma^{\textup{new}}\|_\infty/\|\hbgamma^{\textup{new}}\|_\infty < \delta_{\textup{CWO}} $}
            \State \textbf{Break}
        \EndIf
        
    \EndFor
    
    \State $\ccalM \gets \mathsf{supp}_K \left( \hbgamma \right)$
    
    \State\textbf{Return:} Set $\ccalM$ of indices for active devices ($|\mathcal M | = K$)
    \end{algorithmic}
    \label{alg:rcwo}
\end{algorithm}

\textbf{Main iteration.}

{\it 1) Compute $\bSbi$} using the Sherman-Morrison formula
\begin{align}
    \label{eq_cwo:backslashSigmaforFP}
   \bbTheta= \bSbi
    &=
    (\bS - \hhatgamma_i\bba_i\bba_i^{\herm})^\inv 
  = \bS^\inv + \frac{\hhatgamma_i \bbc_i \bbc_i^\herm }{1 - \hhatgamma_i\bba_i^{\herm}\bbc_i}.
\end{align}

{\it 2) Minimize the conditional log-likelihood}  $\ccalL_i( \gamma \mid \bSbi)$ using the  FP-CW algorithm. 
This gives the new signal power~$\gamma_i^{\textup{new}}$ estimate.  

{\it 3) Update the inverse covariance matrix} 
using the Sherman-Morrison formula
\begin{align}
    \label{eq_cwo:covariance_new_inv}
    (\bS^{\textup{new}})^\inv 
    & = 
    (\bS + \delta_i \bba_i\bba_i^\herm)^\inv
        = \bS^{-1} - \frac{\delta_i \bbc_i \bbc_i^\herm}{1+\delta_i\bba_i^{\herm}\bbc_i}.
\end{align}
where  $\delta_i = \gamma_i^{\textup{new}} - \hhatgamma_i$ is the difference between old and new powers of the $i$-th signal 

{\it 4) Cycle through indices and terminate} if the relative error 
\begin{align}
    \label{eq_cwo:stop}
\|\hbgamma - \hbgamma^{\textup{new}}\|_\infty/\|\hbgamma^{\textup{new}}\|_\infty 
\end{align}
falls below  a tolerance threshold  $\delta_{\textup{cwo}}$  or the maximum number of cycles (e.g.,~$I_\textup{cyc} = 100$) has been reached.

The estimated support is found by using 
\begin{align}
    \ccalM \leftarrow \textup{supp}_K\left( \bbgamma^{\textup{new}} \right),
\end{align}
where $\textup{supp}_K(\cdot)$ finds the $K$-largest elements.
Algorithm~\ref{alg:rcwo} summarizes the steps.
\revised{
We also note that our empirical experiments comparing cyclic and randomized coordinate update strategies for RCWO showed nearly identical performance across various sparsity levels and pilot lengths.  
This indicates that the choice of update ordering has minimal impact on the detection accuracy for our specific AD problem.}

The dominant complexity of the RCWO algorithm relies on the FP-CW algorithm in line~11 of Algorithm~\ref{alg:rcwo}, which requires the complexity of $\ccalO(TML^2)$.
Thus, for each cycle, the complexity of RCWO algorithm is $\ccalO(TMNL^2)$, where~$N$ is the total number of devices.
For the Gaussian-based CWO algorithm proposed in~\cite{haghighatshoar2018improved, chen2019covarianceICC,fengler2021nonTIT}, its complexity is $\ccalO(NL^2)$, which is smaller than that of the proposed RCWO algorithm.
However, the increased complexity $\ccalO(TMNL^2)$ remains manageable and can be justified by the improved robustness and performance, which is shown in Section~\ref{sec:simu}.
\revised{Table~\ref{tab:complexity} summarizes the complexity of algorithms discussed above, where $k$ denotes the number of atoms selected in SOMP~\cite{tropp2006algorithms}.}

\begin{table}[!t]
\centering
\revised{
\caption{Computational complexity of AD algorithms.}
\begin{tabular}{||c|c||}
\hline
\textbf{Algorithm} & \textbf{Complexity (per iteration)} \\
\hline
CL-MP \cite{marata2024activity} & $\mathcal{O}(NL^2)$ \\
\hline
CWO \cite{fengler2021nonTIT} & $\mathcal{O}(NL^2)$ \\
\hline
AMP \cite{liu2018massiveAMP} & $\mathcal{O}(N^2M)$ \\
\hline
SBL \cite{wipf2004sparse} & $\mathcal{O}(N^2L)$ \\
\hline
SOMP \cite{tropp2006algorithms} & $\mathcal{O}(L \cdot \max(NM, k^2) + k^3)$ \\
\hline
RCWO (proposed) & $\mathcal{O}(TMNL^2)$ \\
\hline
RCL-MP (proposed) & $\mathcal{O}(TMNL^2)$ \\
\hline
\end{tabular}
\label{tab:complexity}
}
\end{table}

\subsection{Robust CL-based Matching Pursuit Algorithm}
\begin{algorithm}[!t]
    \caption{\textbf{RCL-MP}: \textbf{R}obust \textbf{CL}-based \textbf{m}atching \textbf{p}ursuit algorithm}
    \begin{algorithmic}[1]
    \State\textbf{Input:}
    $\bbA$, $\bbY$, $\sigma^2$, $K$, and function $u(\cdot)$. 
    \State\textbf{Initialize:}
    $\bbTheta = \sigma^{-2}\bbI$, $\ccalM = \emptyset$, $\hat \bbgamma = \bbzero$.
    
    \For{$k=1,\ldots,K$}
        \State $\bbB = \begin{pmatrix} \bbb_1  & \cdots  &\bbb_N \end{pmatrix} \gets \bbTheta \bbA$
        \State  $ \tilde{\boldsymbol{s}}= (\tilde s_m)$, where  $\tilde s_m \gets \bby_m^{\herm} \bbTheta \bby_m$, $m=1,\ldots,M$.

        \For{$ i \in \ccalM^\complement $}
           	\State  $\boldsymbol{v} = (v_m)$, where  $v_m \gets  |\bby_m^{\herm}\bbb_i |^2$, $m=1,\ldots,M$.
         \State  $d_i  \gets \bba_i^\herm \bbb_i$ 

            \State 
            $\hat \gamma_i \gets \textbf{FP-CW}( \tilde{\boldsymbol{s}}, \boldsymbol{v}, d_i, u(\cdot),  \hat \gamma_i)$
            \label{alg_tab_rclmp:step_FPCW}
            \vspace{0.3mm}

            \State 
            $s_{m,i} \gets  \tilde s_m - \dfrac{  \hat \gamma_i } {1+ \hat \gamma_i d_i} v_m, \ m=1,\ldots,M$

            \vspace{0.3mm}       
            
            \State
            $\epsilon_i  \gets \frac1{bM}\sum_{m=1}^M \rho (s_{m,i}) - \log|  \bbTheta | + \log (1+ \hat \gamma_i d_i)$
        \EndFor

    \State  $\ccalM \gets \ccalM \cup\{i_k\}$ with $i_k \gets \argmin_{i \in \ccalM^\complement} \epsilon_i$
    \vspace{0.3mm}

    \State $\bbTheta \gets \bbTheta - \dfrac{ \hat \gamma_{i_k}}{1 + \hat \gamma_{i_k} d_{i_k}} \bbb_{i_k}\bbb_{i_k}^{\herm}$
    
    \EndFor
    
    \State\textbf{Return:} Set $\ccalM$ of indices for active devices ($|\mathcal M | = K$)
    \end{algorithmic}
    \label{alg:greedymain}
\end{algorithm}

We develop a Robust CL-based Matching Pursuit (RCL-MP) algorithm, following the generic matching pursuit strategy outlined in~\cite{elad2010sparse}.
Unlike the RCWO updating cyclically, RCL-MP selects the most influential signal index $i$ which gives the minimum $\epsilon_i$ at each iteration and adds it to the support set.
This method updates the parameters by solving the optimization problem
\begin{align}
\label{eq:epsilon_nLLF}
\epsilon_i = \min_{\gamma \geq 0} \ccalL_i(\gamma)
\end{align}
for atoms $i$ in the complement of the current support set $\ccalM^{(k)}$, and adds the atom $i$ to this that minimum value of error $\epsilon_i$ (i.e.,  $i = \argmin_{j} \epsilon_j$).

\revised{ 

{\bf Initialization.} We begin with the following settings. 
\begin{itemize}
    \item 
    The iteration index is set to $k = 0$ and the initial power vector is set to $\boldsymbol{\gamma}^{(0)} = \mathbf{0}_{N \times 1}$, which means that no devices are initially active.
    \item 
    The support set $\mathcal{M}^{(k)}$, which contains the indices of devices identified as active at iteration $k$, is initialized as $\mathcal{M}^{(0)} = \mathsf{supp}(\boldsymbol{\gamma}^{(0)}) = \emptyset$.
    \item The initial covariance matrix is set to $\boldsymbol{\Sigma}^{(0)} =  \bbA \diag(\boldsymbol{\gamma}^{(0)}) \bbA^\herm + \sigma^{2} \bbI =  \sigma^{2} \bbI$, and consequently, the inverse covariance matrix is set to $ \bbTheta^{(0)}= (\bbSigma^{(0)})^{-1}= (1/\sigma^{2}) \bbI $.
\end{itemize}

\textbf{Main iteration.} At each iteration $k = 1, 2, \ldots, K$:

{\it 1) Compute}
$\bbB = \boldsymbol{\Theta}^{(k-1)}\bbA$ and the squared Mahalanobis distances $\tilde{s}_m = \bby_m^{\herm} \boldsymbol{\Theta}^{(k-1)}\bby_m$ for $m = 1, \ldots, M$. 
These values are then reused throughout the iteration, reducing the computational burden.

{\it 2)  Sweep.} 
For each device $i$ not yet in the support set, i.e., $i \in (\mathcal{M}^{(k-1)})^\complement = [\![ N ]\!] \setminus \ccalM^{(k-1)}$:
\begin{itemize}
    \item Compute the elements $v_m = |\bby_m^\herm \bbb_i|^2$ for $m = 1, \ldots, M$, where $\bbb_i$ is the $i$-th column of $\bbB$.
    \item Compute $d_i = \bba_i^\herm \bbb_i$, which represents the squared norm of $\bba_i$ under the metric defined by $\boldsymbol{\Theta}^{(k-1)}$.
    \item Determine the optimal power $\hat{\gamma}_i$ using the FP-CW algorithm.
    \item Compute the updated Mahalanobis distances $s_{m,i} = \tilde{s}_m - \frac{\hat{\gamma}_i}{1 + \hat{\gamma}_i d_i}v_m$ for $m = 1, \ldots, M$.
    \item Compute the associated error $\epsilon_i$
    \begin{align}
        \epsilon_i &=  \min_{\gamma \geq 0} \ccalL_i(\gamma \mid \bbSigma^{(k)} )  \notag  \\
        & = \frac1{bM} \sum_{m=1}^M \rho (s_{m,i}) + \log|  \bbSigma^{(k)} +  \hat \gamma_i  \bba_i\bba_i^{\herm} |. \label{eq:log_det}
    \end{align}
    Note that the log-determinant in \eqref{eq:log_det} can be computed efficiently using formula
    \begin{align}
     &\log \Big|  \bbSigma^{(k)} +   \hat \gamma_i  \bba_i\bba_i^{\herm} \Big|  = \log | \bbSigma^{(k)} ( \boldsymbol{I} + \hat{\gamma}_i  \bbTheta^{(k)} \bba_i \bba_i^\herm) |  \notag  \\ 
     &\quad =  \log | \bbSigma^{(k)}| +   \log (1+ \hat \gamma_i d_i). \label{eq:log_det2}
     \end{align}
     The last identity in \eqref{eq:log_det2} follows from Sylvester's determinant theorem\footnote{It states that for $\bbA$, an $m \times n$ matrix, and $\bbB$, an $n \times m$ matrix, it holds that  $| \bbI _m + \bbA \bbB | = | \bbI_n + \bbB \bbA |$.}.
    Using the fact that $\bbTheta^{(k)}$ is the inverse of $\bbSigma^{(k)}$, and thus $\log | \bbSigma^{(k)} | = - \log | \bbTheta^{(k)} |$, the error $\epsilon_i$ can be computed as
    \begin{equation}
        \epsilon_i = \frac1{bM}\sum_{m=1}^M \rho (s_{m,i}) - \log|  \bbTheta^{(k)} |  + \log (1+ \hat \gamma_i d_i). \nonumber 
    \end{equation}
\end{itemize}

{\it 3) Update support.} 
Select the device index $i_k = \arg\min_{i\in(\mathcal{M}^{(k-1)})^\complement} \epsilon_i$ that gives the minimum error and add it to the support set: $\mathcal{M}^{(k)} = \mathcal{M}^{(k-1)} \cup {i_k}$.

{\it 4) Update the inverse covariance matrix.} The inverse covariance matrix can be efficiently updated using the Sherman-Morrison formula
\begin{align}
\bbTheta^{(k+1)} = \bbTheta^{(k)} - \dfrac{ \hat \gamma_{i_k}}{1 +  \hat \gamma_{i_k} d_{i_k}} \bbb_{i_k}\bbb_{i_k}^{\herm}. \nonumber
\end{align}

{\bf  Stopping criterion.} This process is repeated until $K$ devices are identified, at which point the algorithm terminates and returns the final support set $\mathcal{M}^{(K)}$ as the set of active devices.

Algorithm~\ref{alg:greedymain} displays the pseudo-code of RCL-MP method.
Note that, consistent with common practice in MP approaches in the literature~\cite{tropp2006algorithms}, the RCL-MP algorithm assumes prior knowledge of the number of active devices $K$. 
This assumption is standard across all covariance-based activity detection methods.
}

The dominant complexity of proposed RCL-MP algorithm relies on the FP-CW algorithm in line~\ref{alg_tab_rclmp:step_FPCW} of Algorithm~\ref{alg:greedymain}, which requires the complexity of $\ccalO(TML^2)$.
For each increment in $k$ by $1$, the sweeping range $|(\ccalM^{(k-1)})^\complement|$ decreases by $1$, where $|(\ccalM^{(0)})^\complement|=N$ for $k=1$.
Thus, for each iteration of RCL-MP algorithm, the worst-case complexity is $\ccalO(TMNL^2)$.
The computational complexity of RCLMP algorithm (Algorithm~\ref{alg:greedymain}) is the same as for RCWO algorithm (Algorithm~\ref{alg:rcwo}).
Note that, in the empirical executions, the FP algorithm usually converges within $T=5$ steps.
RCWO method can run for maximum $I_\textup{cyc}$ rounds, while RCL-MP terminates after $K$ rounds. In mMTC scenarios, it is common that $K<I_{\textup{cyc}}$. 
As for other methods widely used in the AD problem, the SBL~\cite{wipf2004sparse} has complexity of $\ccalO(N^2L)$, the VAMP~\cite{liu2018massiveAMP} has complexity of $\ccalO(N^2M)$, and the SOMP~\cite{tropp2006algorithms} has complexity of $\ccalO(L\cdot\max(NM,k^2)+k^3)$, where $k$ is the index of current iteration.

The complexity of our proposed methods (RCWO and RCL-MP) scale with the number of antenna elements $M$.  
This is not desirable in the massive-MIMO/large-antenna system, where $M$ can be very large.
The computational complexity of Gaussian-based methods,  CWO ~\cite{haghighatshoar2018improved,fengler2021nonTIT, chen2024CLmethod} and CL-MP \cite{marata2024activity} 
is $\ccalO(NL^2)$, which is more suitable in massive MIMO scenarios. 

This proposed RCL-MP algorithm reduces to CL-MP algorithm~\cite{marata2024activity} if one uses Gaussian loss ($\rho(t)=t$). 
For Gaussian loss, the device's power $\gamma_i$ has a closed form solution~\cite{marata2024activity} given by
\begin{align}
   \hat \gamma_i = \max \left(
    \frac{\bba_i^{\herm} \bSb^{-1}   ( \frac1M \bbY\bbY^\herm - \bSb)  \bSb^\inv\bba_i}{(\bba_i^{\herm}\bSb^{-1}\bba_i)^2}, 0 \right).
\end{align}
In this case, $u(t)=1, \forall t$, so adaptive weighting in \eqref{eq:skewed_SCM} is not necessary. 
Thus, the CL-MP benefits from a lower computational complexity of $\ccalO(NL^2)$.
However, when non-Gaussian noise models are used, such as Huber's loss or $t$-loss, the assumption that all observations have the same importance is no longer valid. 
These robust losses deal with heavy-tailed distributions or impulsive noise by down-weighting the influence of outliers in the data, where the outliers are featured by large squared Mahalanobis distances~\eqref{eq:eta_def}.
The adaptivity of~\eqref{eq:skewed_SCM} adjusts the contribution of each observation and thus provides robustness.
However, the increased complexity $\ccalO(TMNL^2)$ is the cost for additional robustness.
Also note that when the quantile parameter $q$ of Huber's loss~\eqref{eq:huberloss} is  $q \approx 1$ (so tuning constant $c^2$ is large), we expect CL-MP and RCL-MP to perform similarly.

\section{Numerical Experiment}
\label{sec:simu}

This section validates the performance of our proposed RCWO and RCL-MP algorithms, comparing them with common algorithms for the AD problem under a variety of settings.
We use the following performance metrics for this evaluation: the (empirical) probability of exact recovery ($\per$), the probability of missed detection ($\pmd$), and the probability of false alarm ($\pfa$).
Exact recovery occurs when all active devices are correctly detected. We define the (empirical) probability of exact recovery ($\per$)~as 
\begin{align} \per = \frac{1}{N_{\textup{sim}}} \sum_{n=1}^{N_{\textup{sim}}} \mathds{1} (\ccalM_{n} = \ccalM^\circ), 
\end{align} 
where $N_{\textup{sim}}$ is the number of Monte Carlo trials, $\mathds{1}(\cdot)$ denotes the indicator function, $\ccalM^\circ$ is the true set of active devices, and $\ccalM_{n}$ is the estimated support for the $n$-th Monte Carlo trial.
Missed detection occurs when an active device is incorrectly identified as inactive. The (empirical) probability of missed detection ($\pmd$) is  defined the ratio of  missed devices relative to the total number of active devices 
averaged over MC trials:
\begin{align} \pmd & = \frac{1}{N_{\textup{sim}}} \sum_{n=1}^{N_{\textup{sim}}} \dfrac{|\ccalM^\circ \setminus \ccalM_{n} |}{|\ccalM^\circ|}. 
\end{align}
\revised{
Since we select exactly $K$ devices as active in our simulations (where $K = |\mathcal{M}^\circ|$ is the true number of active devices), the number of false alarms equals the number of misdetections, 
thus, $\pfa =\pmd$ (see also~\cite{chen2018sparse}).
Consequently, in the following experiments, we only show $P_{\text{MD}}$.}
All experiments are conducted on a workstation equipped with a 16-core Intel Xeon w5-2465X CPU and 192 GB of memory.
The code is available at \url{https://github.com/xnnjw/RobustAD_release}.

\subsection{Synthetic Simulation}
\label{subsec:synthetic_simulation}
\begin{figure}[!t]
    \centerline{\includegraphics[width = 0.99\linewidth]{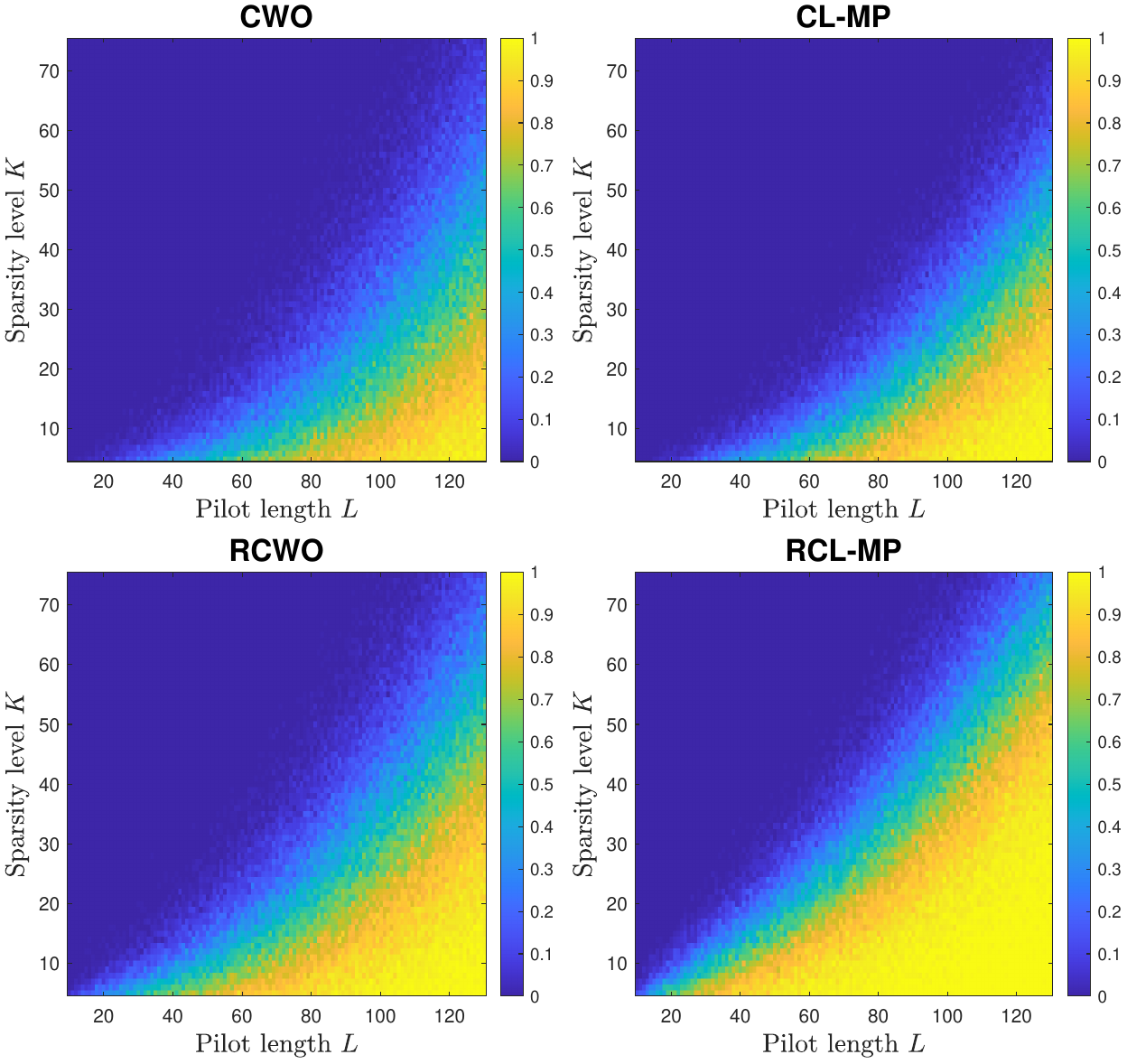}}
    \caption{
        \revised{Phase transition of the average $\per$ for the CWO, CL-MP, RCWO, and RCL-MP algorithms, with $N=1000$ and $M=100$. 
        The noise follows an $\epsilon$-contaminated Gaussian distribution ($\epsilon=0.05$, $\lambda=10$).
        The tuning parameters of Huber loss function in RCWO and RCL-MP are set as $q = 0.9$.
        } 
    }
    \vspace{-3mm}
    \label{fig:phase_transition}
\end{figure}
\begin{figure}[!t]
    \centering
    \includegraphics[width=0.99\linewidth]{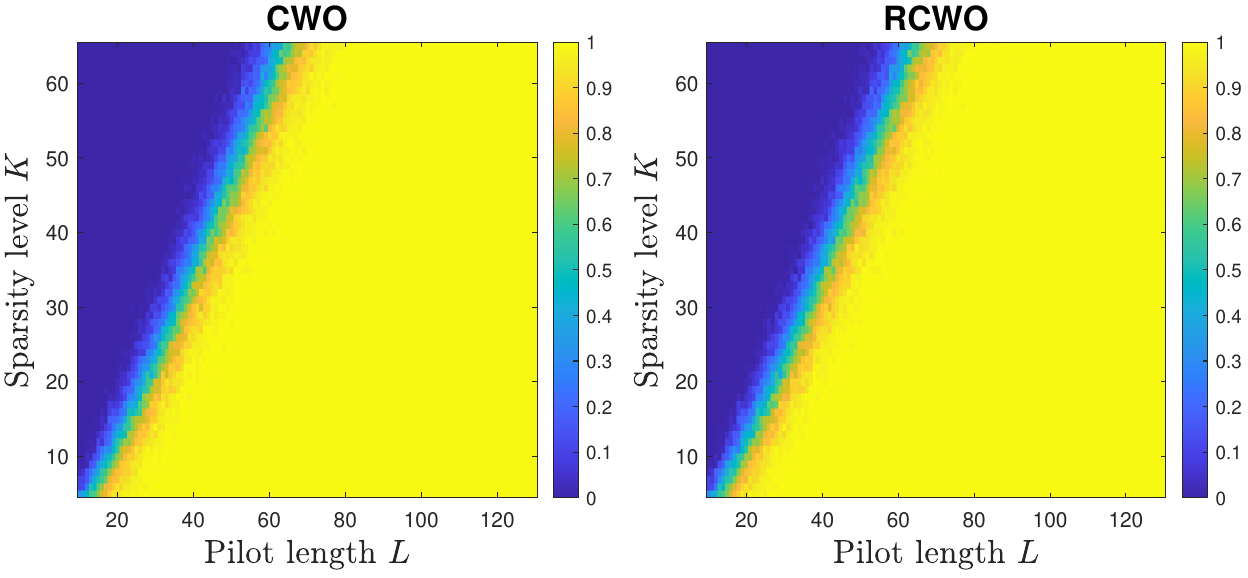}
    \caption{
        \revised{Phase transition of the average $P_\textrm{ER}$ for the CWO and RCWO algorithms, with $N=1000$ and $M=100$. 
        The noise follows a standard Gaussian distribution.
        The tuning parameter of Huber loss function in RCWO is set as $q = 0.999$.}
        }
    \vspace{-3mm}
    \label{fig:phase_gaussian}
\end{figure}
\begin{figure*}[!t]
    \centerline{\includegraphics[width = 0.95\linewidth]{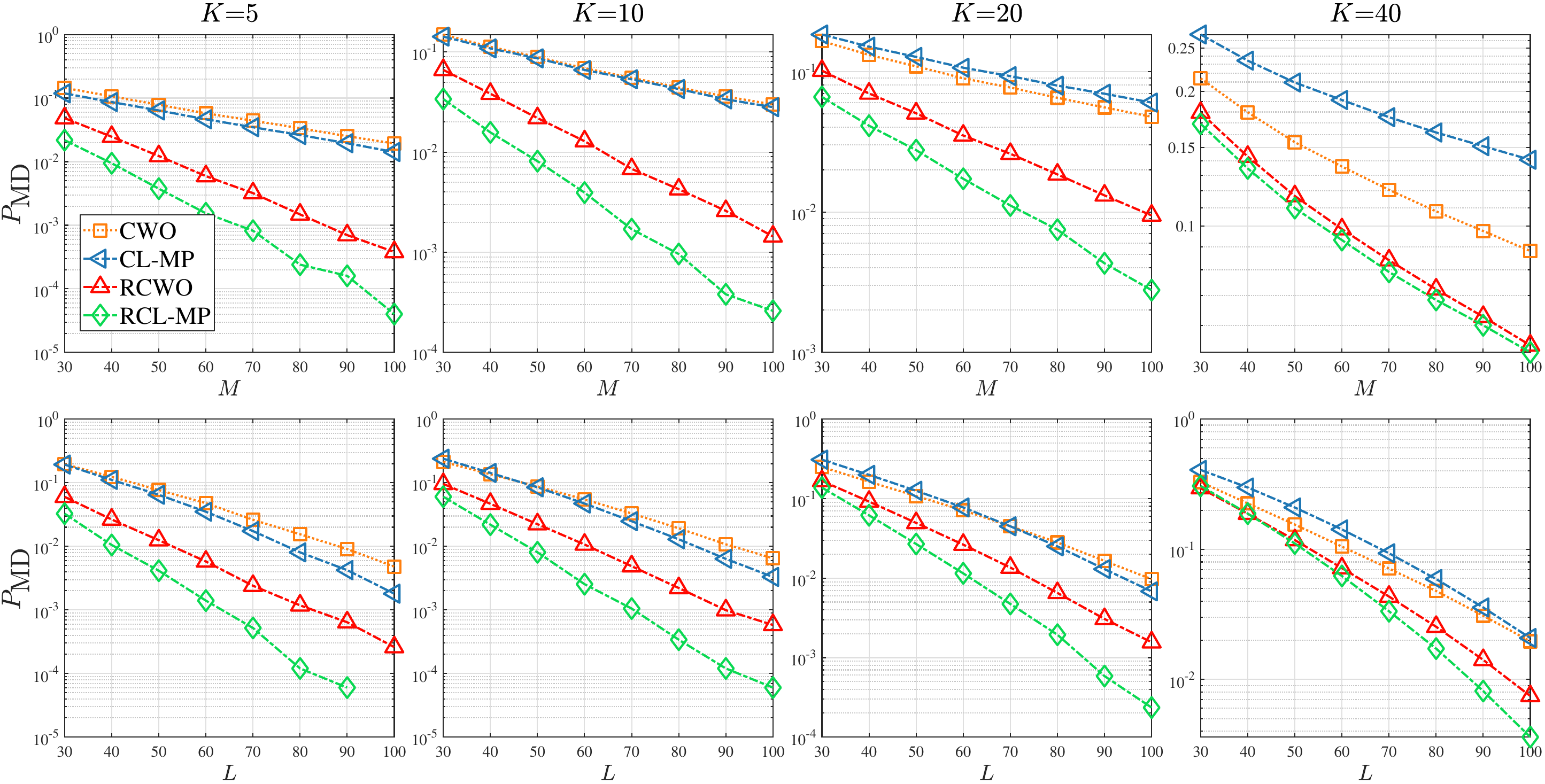}}
    \caption{\textit{Top:} Probability of Miss Detection ($\pmd$) versus the number of antennas $M$ with a fixed pilot length ($L=50$) for varying sparsity levels $K \in \{5,10,20,40\}$.
    \textit{Bottom:} $\pmd$ versus pilot length $L$ with a fixed number of antennas ($M=50$). The proposed RCWO and RCL-MP algorithms use Huber's loss with parameter $q=0.9$, and the noise follows an $\epsilon$-contaminated Gaussian distribution ($\epsilon=0.02$, $\lambda=10$).
    The number of devices is $N = 1000$.}
    \vspace{-3mm}
    \label{fig:synthetic_PmdvsMandL}
\end{figure*}
\begin{figure}[!t]
    \centerline{\includegraphics[width = 0.98\linewidth]{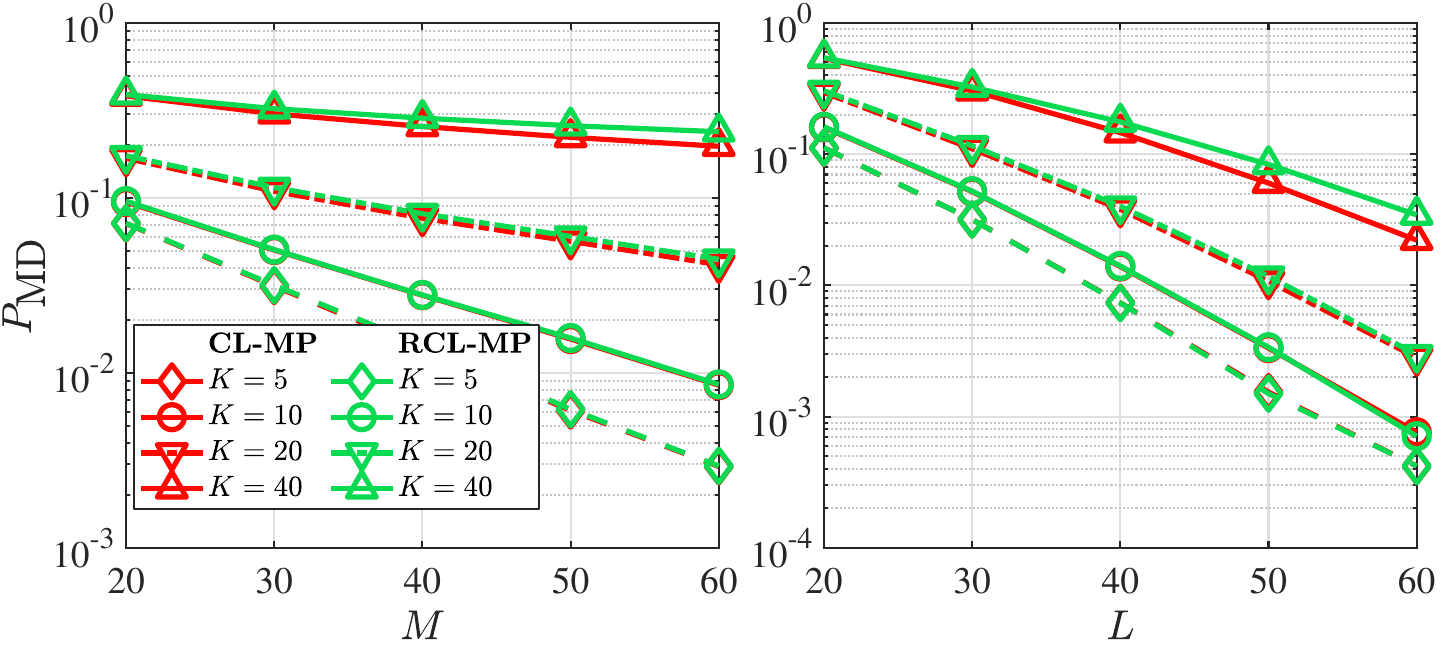}}
    \caption{Consistency between CL-MP and RCL-MP algorithms with tuning constant $q=0.999$ under $\epsilon$-contaminated noise, with $\epsilon=0$ indicating purely Gaussian noise. 
    The number of devices is $N=1000$, and the number of active devices vary within $\{5,10,20,40\}$.
    \textit{Left:} $\pmd$ versus varying number of antennas $M$ with a fixed pilot length ($L=30$).
    \textit{Right:} $\pmd$ versus varying pilot length $L$ with a fixed number of antennas ($M=30$).
    }
    \vspace{-3mm}
    \label{fig:PmdvsM_gaussian}
\end{figure}
\begin{figure*}[!t]
    \centering
    \includegraphics[width = 0.8\linewidth]{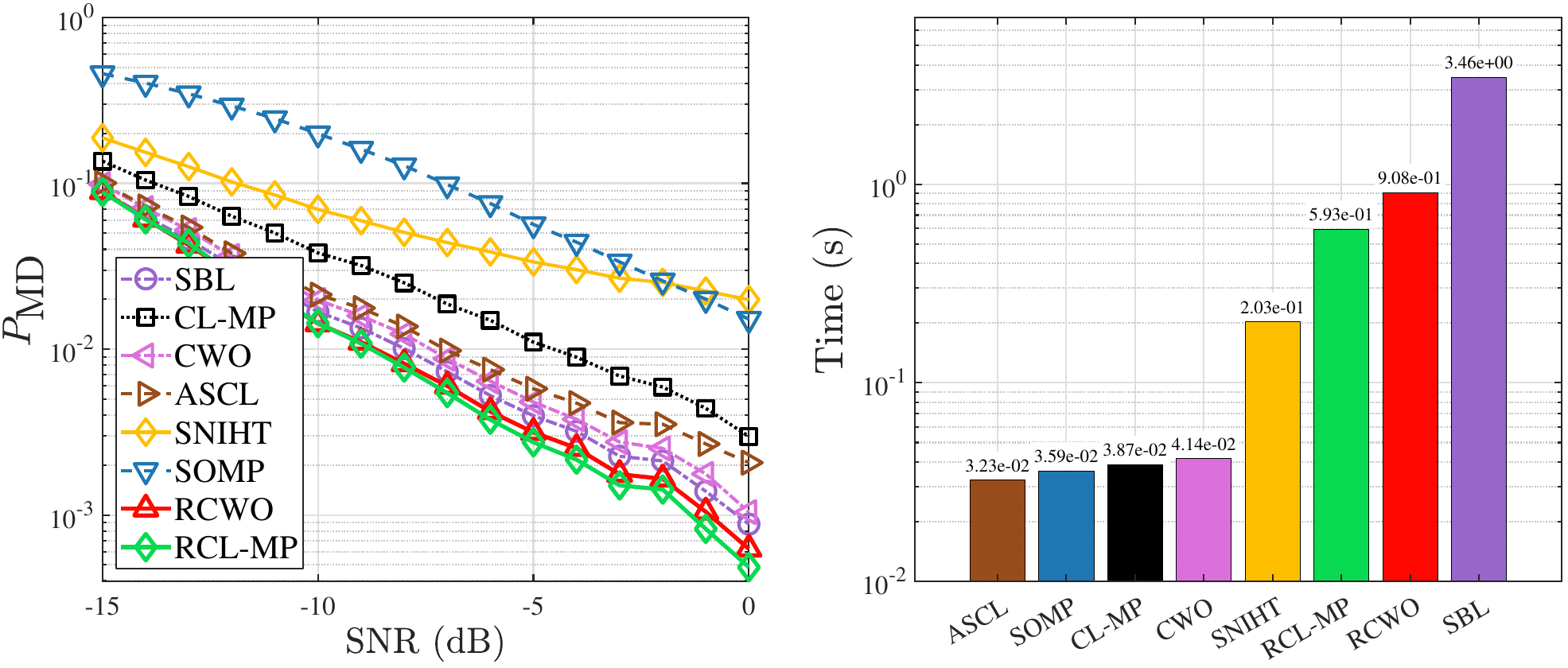}
    \caption{\textit{Left}: Probability of miss detection versus SNR. 
    \textit{Right}: Average running time of compared algorithms.
    The noise is impulsive $t$-noise with $\nu=2.5$, and the system configuration is as: $N=1000$, $K=25$, $M=40$ and $L=40$.}
    \label{fig:Uplink_PMDPFA}
\end{figure*}

In this simulation, the elements of the random pilot matrix~$\bbA$ are drawn from a Bernoulli distribution and normalized as $\|\bba_i\|_2^2= L$ for $i=1,\ldots,N$. 
The total number of devices is set to $N=1000$.
The LSFCs~$\{\beta_i\}_{i=1}^N$ are assumed to follow a uniform distribution in decibel (dB) scale, where~$10\log_{10}(\beta_i)$ ranges from $10\log_{10}(\beta_{\min})$ to $10\log_{10}(\beta_{\max})$~dB.
We select $\beta_{\min}$ and $\beta_{\max}$ such that this distribution ranges in $[-15, 0]$ dB.
We compare the proposed RCWO and RCL-MP algorithms with the CWO algorithm~\cite[Algorithm~1]{fengler2021nonTIT} and the CL-MP algorithm~\cite{marata2024activity}.
Both CL-MP and RCL-MP algorithms terminate after~\(K \) iterations, returning the estimated support by identifying the indices of the $K$-largest entries. 
The set $\mathcal M$ of active devices is randomly chosen from ${1,\ldots,N}$ for each Monte Carlo trial.

We test these methods under $\epsilon$-contaminated noises, where the noise $\bbe_m$ is drawn from $\mathcal C \ccalN_L (\bbzero, \sigma^2\bbI)$ with probability~$1 - \epsilon$ and from another Gaussian distribution $\mathcal C \ccalN (\bbzero, \lambda^2\sigma^2\bbI)$ with probability $\epsilon$, where $\lambda$ denotes the strength of outlier noises.
Note that for $\epsilon\to 0$, this noise becomes Gaussian. 
For example, if we set $\epsilon=0.02$, i.e., majority of measurements are corrupted by the nominal Gaussian noise with known variance~$\sigma^2$, while a small proportion ($2\%$) is corrupted by Gaussian noise with a tenfold larger amplitude (i.e., $\lambda=10$).
The loss function in RCWO and RCL-MP algorithms is set to the Huber's loss~\eqref{eq:loss-H}, and we set the loss parameter as $q=0.9$.

\subsubsection{Phase transition}
We examine the impact of the sparsity level~$K$ and pilot length $L$ on the probability of exact recovery ($\per$) for CWO, CL-MP, RCWO and RCL-MP in comparison.

\revised{
Fig.~\ref{fig:phase_transition} illustrates the phase transition for all compared algorithms under $\epsilon$-contaminated Gaussian noise. 
The experimental results demonstrate that our proposed algorithms (RCWO and RCL-MP) exhibit better robustness compared to their non-robust counterparts (CWO and CL-MP) under non-Gaussian noise. 
Specifically, the RCWO and RCL-MP algorithms achieve a higher probability of exact recovery ($\per$) than CWO and CL-MP when the noise is contaminated by outliers.

We also examine the case of standard Gaussian noise, shown in Fig.~\ref{fig:phase_gaussian}. When noise follows a Gaussian distribution, CWO and RCWO (with $q=0.999$) exhibit nearly identical phase transition patterns. 
This is expected since RCWO reduces to CWO when $q$ approaches 1 and the noise is Gaussian. 
This observation aligns with the theoretical analysis in~\cite{chen2021phase}, where the phase transition of CWO~\cite[Algorithm~1]{fengler2021nonTIT} under Gaussian noise was characterized. 
The similarity of phase transition patterns confirms that our robust framework preserves the theoretical guarantees established in~\cite{chen2021phase} for the Gaussian case.
}

\subsubsection{Effect of $M$, $L$ and $K$}
\label{subsubsec:syntheticexp_effectMLK}
We investigate the effects of number of antennas~$M$, pilot length~$L$, and number of active devices~$K$ on the performance of AD.
The number of Monte Carlo trials is~$3000$ and the total number of devices is~$N = 1000$.
Fig.~\ref{fig:synthetic_PmdvsMandL} shows the probability of miss detection $\pmd$ as a function of the number of antennas $M$ and the pilot length $L$, with varying numbers of active devices~$K \in \{5,10,20,40\}$.
In the \textit{top} panel, where the pilot length is fixed ($L=50$) and the number of antennas varies, our proposed RCWO and RCL-MP algorithms outperform the other covariance learning-based methods, CWO and CL-MP, which both assume Gaussianity. 
The probability of missdetection improves for all algorithms as $M$ increases due to enhanced structural sparsity in the received signal matrix $\bbY$, leading to more accurate estimations. 
This improvement is more pronounced for RCL-MP and RCWO algorithms, highlighting the robustness and benefits of a larger antenna array. 
As the number of active devices $K$ increases, all algorithms show reduced accuracy due to decreased sparsity. 
Notably, 
\revised{the performance of RCWO gradually approaches that of RCL-MP as $K$ increases},
indicating that the matching pursuit-based methods are more effective when the sparsity level is high, i.e., when $K$ is small.

In the \textit{bottom} panel, with the number of antennas fixed at~$M=50$ and varying pilot lengths $L$, performance improves as $L$ increases due to better pilot structure and longer detection periods. 
RCWO and RCL-MP consistently outperform the other Gaussian-based methods across all pilot lengths. 
The same trend is observed as $K$ increases, with RCWO eventually reaching same level of performance as RCL-MP when the number of active devices is the largest ($K=40$).

\subsubsection{Consistency with Gaussian-based method}

Fig.~\ref{fig:PmdvsM_gaussian} highlights the strong similarity between RCL-MP and its Gaussian counterpart, CL-MP, when using Huber's loss with tuning parameter set nearly to its maximal value ($q=0.999$) under Gaussian noise, across varying sparsity levels ($K \in \{5, 10, 20, 40\}$). 
The curves for CL-MP and RCL-MP overlap in most scenarios, suggesting that the RCL-MP algorithm provides equivalent performance as CL-MP when $q\approx 1$ as expected. 
This adaptability indicates that our proposed method can effectively handle both Gaussian and non-Gaussian noise, demonstrating greater robustness and flexibility compared to methods that assume Gaussian noise exclusively.

\subsection{Cellular MIMO uplink simulation}
\begin{figure*}[!t]
    \centering
    \subfloat{
        \includegraphics[width = 0.9\linewidth]{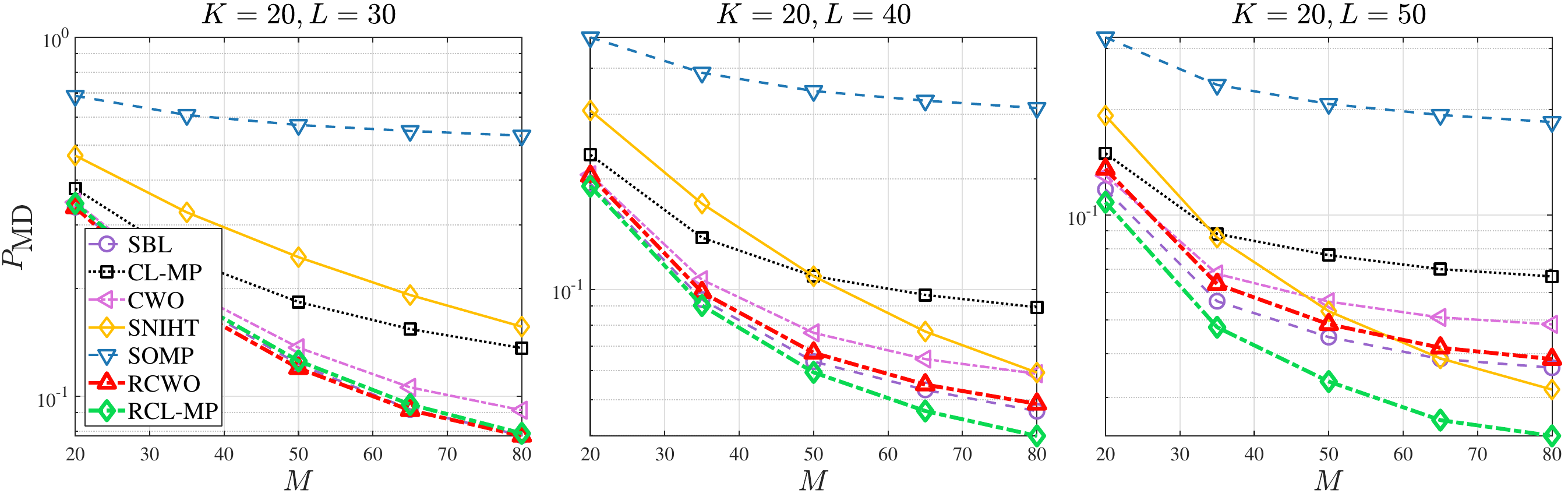}}\\
    \vspace{-2mm}
    \subfloat{
        \includegraphics[width = 0.9\linewidth]{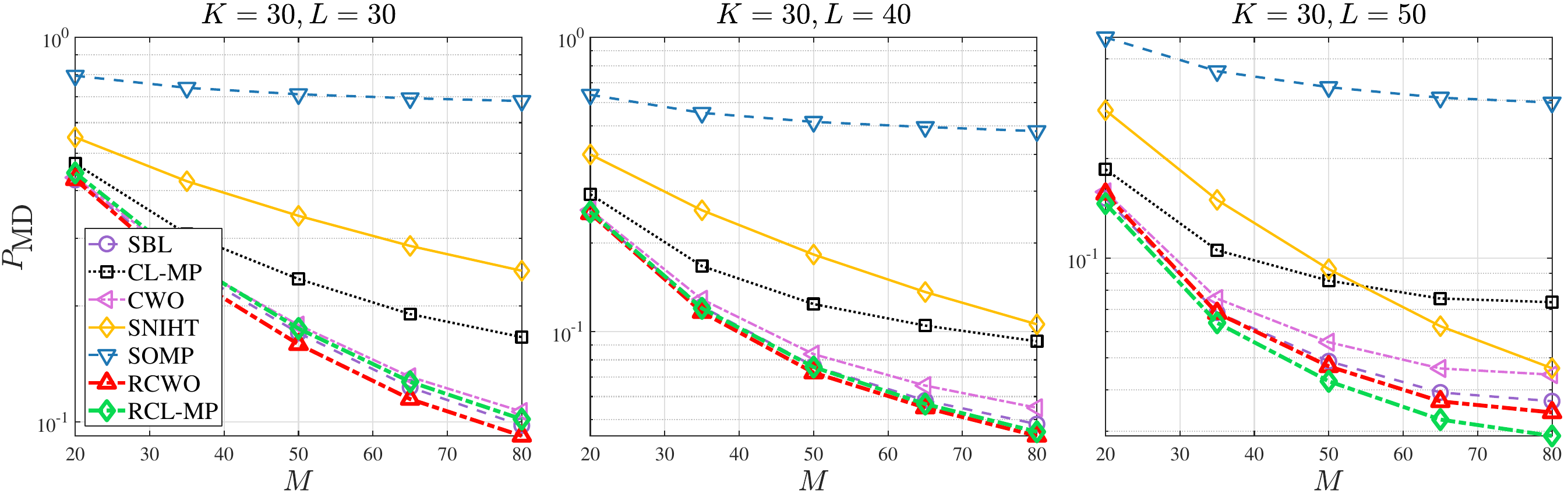}}\\
    \vspace{-2mm}
    \subfloat{
        \includegraphics[width = 0.9\linewidth]{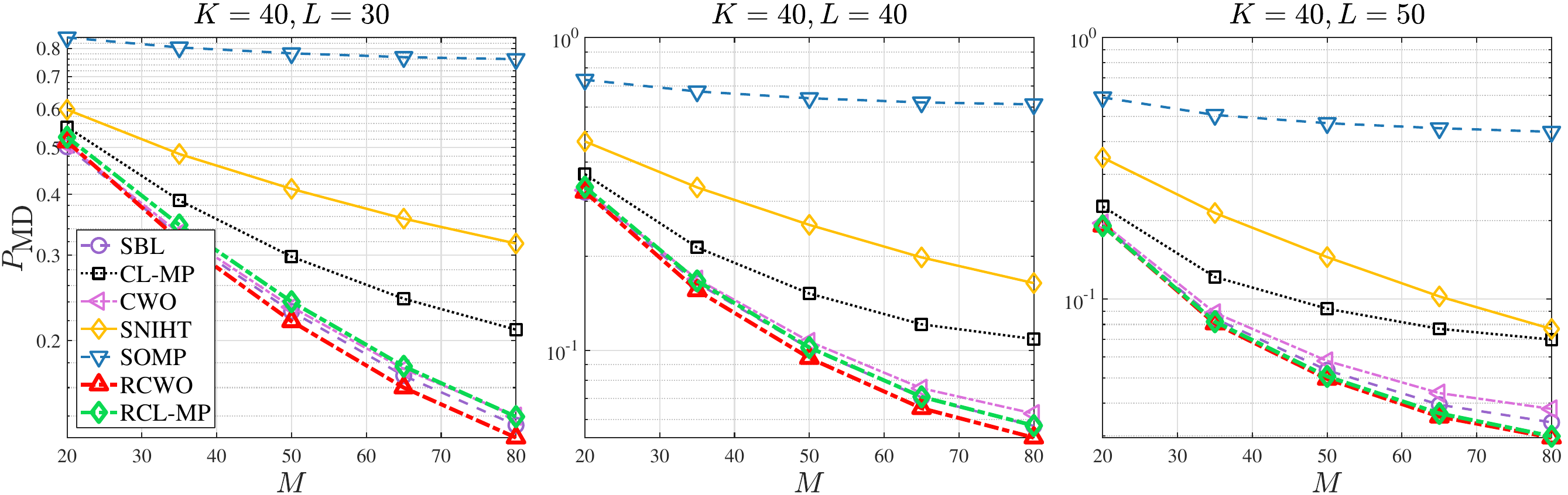}}
    \caption{Impact of the number of antennas $M$ on the probability of miss detection ($\pmd$) for various sparsity levels $K$ and pilot lengths $L$. Rows correspond to different sparsity levels $K \in \{20, 30, 40\}$, and columns represent pilot lengths $L \in \{30, 40, 50\}$. The noise is impulsive $t$-noise with $\nu=2.5$. We set $\SNR=-15\textup{dB}$ and $N=1000$. 
    \revised{
    Note that the SBL curve (purple) overlaps with the RCL-MP curve in the top-left panel ($K=20, L=30$) and bottom-middle panel ($K=40, L=40$) due to their similar performance in these specific configurations.
    }
    }
    \vspace{-3mm}
    \label{fig:MIMOuplink_overKLM}
\end{figure*}
We evaluate the performance of the proposed RCWO and RCL-MP algorithms in a cellular massive multiple-input and multiple-output (MIMO) uplink mMTC network. 
The BS serves $N=1000$ devices, which are randomly distributed within a range of $[0.05 \text{ km}, 1 \text{ km}]$.
For the LSFCs, we use the path loss model from~\cite{liu2018sparse}, expressed as 
\begin{align} 
    \beta_{i} = -128.1 - 36.7 \log_{10}(d_{i}), 
\end{align} 
where $d_{i}$ represents the distance in kilometers between the BS and the $i$-th device.
In terms of power control, we follow the scheme proposed in~\cite{senel2018grant}
\begin{align} 
    \varrho_{i} = \frac{\varrho_{\max} (\beta_{i})_{\min}}{\beta_{i}}, 
\end{align} 
where $\varrho_{\max}$ is the maximum transmission power of the devices, and $(\beta_{i})_{\min}$ denotes the minimum LSFC. 
This scheme ensures that the device with the lowest LSFC transmits at maximum power, while the transmission powers of the other devices scale inversely with their LSFCs. 
We set the maximum transmission power to $\varrho_{\max} = 0.2$W.
The bandwidth and coherence time of the wireless channel are set to $1$ MHz and $1$ ms, respectively, allowing for the transmission of $1000$ symbols. 
We allocate $10\%$ of the available symbols for pilot sequences, resulting in a maximum pilot length of $L_{\max} = 100$.

We use the Bernoulli pilot as described in Section~\ref{subsec:synthetic_simulation}.
Due to the finite set of unique pilot sequences, there is a non-zero probability of two devices having identical sequences, known as the collision probability~\cite{senel2018grant}. 
According to the calculation in~\cite[Eq.~(2)]{senel2018grant}, for $N=1000$ devices and a pilot length of $L=20$, the collision probability is $4.54 \times 10^{-7}$, which is negligible in practice.
As the pilot length increases while keeping the number of devices fixed, the collision probability further decreases.

The performance of the proposed RCWO and RCL-MP algorithms is compared to the following algorithms
\begin{enumerate}
    \item Coordinatewise Optimization Algorithm (CWO)~\cite[Algorithm 1]{fengler2021nonTIT}
    \item Covariance-based Matching Pursuit (CL-MP)~\cite{marata2024activity}
    \item Sparse Bayesian Learning (SBL)~\cite{wipf2004sparse}
    \item Simultaneous Normalized Iterative Hard Thresholding (SNIHT)~\cite[Algorithm 1]{blanchard2014greedy}
    \item Simultaneous orthogonal matching pursuit (SOMP)~\cite[Algorithm 3.1]{tropp2006algorithms}
    \item Active Set Covariance Learning (ASCL)~\cite[Algorithm~1]{wang2021efficientJoint} 
\end{enumerate}
Note that the SBL algorithm is implemented using the Expectation-Maximization (EM) approach as described in~\cite{wipf2004sparse}, rather than the coordinate-wise optimization variant.

\revised{
    Each row vector of the noise matrix $\bbE \in \mathbb{C}^{L \times M}$ is generated independently from a circular $M$-variate $t$-distribution with $\nu=2.5$  degrees of freedom and covariance matrix $\sigma^2\bbI_{M}$. 
    The $t$-distribution naturally models impulsive noise through its heavier tails compared to Gaussian distributions, with the parameter $\nu$ controlling  heaviness of the tails (smaller values produce more frequent outliers). 
    For $\nu > 2$, this noise can be generated by scaling an $M$-variate circular complex Gaussian random vector $ \boldsymbol{z} \sim \mathcal C \mathcal N_M(\boldsymbol{0}, \bbI)$  by $\sqrt{\sigma^2(\nu-2)/\chi^2_{\nu}}$, where $\chi^2_{\nu}$ represents independent chi-squared random variable with $\nu$ degrees of freedom. 
    Our choice of $\nu = 2.5$ represents moderately heavy-tailed noise that effectively challenges Gaussian-based methods while maintaining finite variance.
    }

\subsubsection{Effect of SNR}
In this simulation, the number of active devices is set to \( K = 25 \), with the number of antennas and pilots both set to \( M = 40 \) and \( L = 40 \), respectively. 
The performance is evaluated using \( \pmd \), along with the average running time of the algorithms.

Fig.~\ref{fig:Uplink_PMDPFA} (left panel) shows the \( \pmd \) performance of all algorithms over SNR levels ranging from \([-15, 0]\)~dB.
The proposed RCL-MP algorithm achieves the best performance in all cases, closely followed by the robust RCWO.  
As the SNR increases, \( \pmd \) decrease. 
The SBL algorithm ranks third, and thus has the best performance among the competing methods. 
However, SBL's quadratic complexity (\( \ccalO(N^2L) \)) with respect to the total number of devices \( N \) makes it impractical for large-scale mMTC scenarios, where \( N \) is typically large. 
The proposed RCWO and RCL-MP algorithms consistently outperform their Gaussian-based counterparts at all SNR levels.

Fig.~\ref{fig:Uplink_PMDPFA} (right panel) shows the average running time of the algorithms. 
The ASCL algorithm shows slightly faster running time compared to CWO while maintaining slightly worse detection performance.
The running time of RCL-MP is comparable to SNIHT and Gaussian-based CWO. 
RCWO is slower than RCL-MP but faster than SBL, which has the longest running time due to its high computational complexity.
The additional running time of RCL-MP compared to CL-MP is caused by the FP computations in the FP-CW algorithm. 
CL-MP, SOMP and SNIHT show lower running times but underperform in~\( \pmd \) compared to Gaussian-based CWO and the proposed robust methods. 
While methods such as ASCL, CL-MP, and SOMP show the lowest running times (yet comparable to that of the proposed CWO), they consistently underperform in~\(\pmd\) compared to the proposed robust methods.
While RCL-MP requires more computation, it achieves better performance under non-Gaussian noise and maintains a running time comparable to CWO.

\subsubsection{Effect of $M$, $L$ and $K$}
We investigate the effects of the number of antennas~$M$, pilot length~$L$, and the number of active devices~$K$ on the AD performance. 
Unlike the setup in Section~\ref{subsubsec:syntheticexp_effectMLK}, this evaluation employs a different configuration of LSFCs, uses impulsive $t$-noise with $\nu=2.5$, and assumes an average received SNR of $-15$ dB.
Fig.~\ref{fig:MIMOuplink_overKLM}~illustrates $\pmd$ as a function of the number of antennas $M$ across various combinations of sparsity levels $K$ and pilot lengths $L$. 
In this figure, rows correspond to varying sparsity levels  $K \in \{20, 30, 40\}$, while columns represent different pilot lengths $L \in \{30, 40, 50\}$.
For all methods, increasing the number of antennas $M$ reduces $\pmd$ consistently, as more antennas enhance the structural sparsity of the received signal $\bbY$, improving estimation accuracy. 
Fig.~\ref{fig:MIMOuplink_overKLM} shows that this trend holds for all pilot lengths $L$ and sparsity levels $K$.

As the pilot length \( L \) increases, the overall performance improves for all algorithms, as shown by the progression from the left to right columns in Fig.~\ref{fig:MIMOuplink_overKLM}. 
This improvement is due to longer pilot lengths and detection periods, which enhance covariance estimation.
This behavior is consistent with the Gaussian noise case, where CL-based methods exhibit quadratic performance scaling with pilot length~\cite{chen2021phase}. However, under impulsive noise, the performance of Gaussian based algorithms changes.

Within each column, the relative performance of RCWO and RCL-MP varies. 
As \( K \) increases, the relative advantage of RCL-MP diminishes. 
For high sparsity levels (\( K = 40 \)) and short pilot lengths (\( L = 30 \)), RCL-MP does not achieve the best performance, while the CWO-based methods (RCWO and CWO) perform better, with RCWO outperforming CWO. However, when $L$ increases to $L=40$ and $L=50$, RCL-MP performs better than CWO. 
This is likely due to the additional errors introduced into the covariance estimation under strong impulsive noise (\( \text{SNR} = -15 \)~dB) when longer pilot lengths are used. 
In contrast, matching pursuit-based methods, such as RCL-MP, operate over a fixed \( K \) iterations, making them less susceptible to noise accumulation. 
As \( L \) increases, RCL-MP outperforms all other methods, while RCWO’s relative performance declines.
These results demonstrate the trade-offs between pilot length, sparsity levels, and robustness under impulsive noise. While RCL-MP is generally robust and performs well across scenarios, its relative advantage depends on the interplay of \( M \), \( L \), and \( K \).

\section{Conclusion}
\label{sec:Conclusion}
This paper presented robust algorithms for device AD in massive mMTC under non-Gaussian noise conditions. 
By replacing the Gaussian loss function with robust alternatives, we addressed the limitations of existing methods in handling heavy-tailed noise and outliers. 
The proposed methods, supported by a FP algorithm with proven convergence, demonstrated superior performance in extensive numerical experiments, particularly in challenging impulsive noise environments.
The numerical experiments confirmed the effectiveness and robustness of the proposed approaches in improving detection accuracy under diverse noise conditions, while maintaining practical computational efficiency. 
These findings make the proposed methods promising solutions for real-world mMTC systems where accurate activity detection is critical.
Future work may explore extending these methods to more complex network settings, such as cell-free architectures, and addressing real-time implementation challenges in large-scale systems.

\appendix
\subsection{Proof of Theorem~\ref{thm:gconvexity_Li}} 
\label{proof:thm_gconvexity_Li}
Recall the conditional objective function
\begin{align*}
    \label{eq:condi_proof} 
    \ccalL_i(\gamma) = 
    \frac1{bM}\sum_{m=1}^{M} \rho(s_{m,i}(\gamma))
    + \log|\bSb + \gamma\bba_i\bba_i^{\herm}|.
\end{align*}
Our objective is to show that $\ccalL_i(\gamma)$ is geodesically convex in~$\gamma\geq 0$, meaning that $\ccalL_i(e^x)$ is convex in $x\in\reals$.
It suffices to show that $\rho(s_{m,i}(\gamma))$ and the 2nd term of $\ccalL_i(\gamma)$ and are geodesically convex.
Since the sum of geodesically convex functions is still geodesically convex, the entire objective function $\ccalL_i(\gamma)$ is geodesically convex in~$\gamma\in\reals_{\geq0}$.

Let
\begin{align}
    h(\gamma) = \log|\bSb + \gamma\bba_i\bba_i^{\herm}|.
\end{align}
Using the matrix determinant lemma\footnote{The matrix determinant lemma for a rank-one update states that for an invertible matrix $\bbA$ and vectors $\bbu,\bbv$, the determinant of $\bbA + \bbu\bbv^{\herm}$ is given by $|\bbA + \bbu\bbv^{\herm}| = (1 + \bbv^\herm \bbA^\inv \bbu)|\bbA|$.}  and the shorthand notation $d_i = \bba_i^{\herm} \bSb^\inv \bba_i$, we have
\begin{align}
    h(e^x) & = \log|\bSb + e^x \bba_i\bba_i^{\herm}| \notag \\
    & = \log \left( (1 + e^x \bba_i^{\herm} \bSb^\inv \bba_i) |\bSb| \right) \notag \\
    & = \log(1 + d_i e^x) + \log |\bSb|.
\end{align}
The second derivative of $h(e^x)$ is
\begin{align}
    h''(e^x)
    = \frac{d_i e^x}{(1+ d_i e^x)^2} \geq 0,
\end{align}
which shows that $h(e^x)$ is convex in $x \in \mathbb{R}$, or  equivalently, $h(\gamma)$ is geodesically convex in $\gamma\in\reals_{\geq0}$.

From the known expression for $s_{m,i}(\gamma)$ in~\eqref{eq:fp_alg_1eta}, we write 
\begin{align*}
s_{m,i}(\gamma) 
    & = c_m - \frac{\gamma b_m^2} {1+\gamma d_i},
\end{align*}
where  $b_m = |\bby_m^{\herm}\bbb_i|$ and $c_m = \bby_m^{\herm}  \bSb^{-1}  \bby_m$. 
We need to prove that $\rho(s_{m,i}(\gamma))$ is geodesically convex in $\gamma$, i.e., the function 
\[
g(\gamma)= \rho\left(c_m -   b_m^2\gamma (1+\gamma d_i)^{-1} \right)
\]
is geodesically convex. 
Define a function 
\begin{equation*} 
    \tilde g(\gamma,\eta)= \rho\left(c_m -   b_m^2\gamma \eta \right).
\end{equation*}
Let  $\eta=(1+\gamma d_i)^{-1}$ and note that $\eta \in \reals_{\geq0}$.
Write 
\begin{equation}\label{eq:tildeg}
    \tilde g(e^x,e^y)=  \tilde \rho\left(e^{x+y} \right),
\end{equation}
where $\tilde \rho(\gamma)=\rho\left(c_m -   b_m^2  \gamma \right)$ and $x,y \in \mathbb{R}$. 
Since $\rho(\gamma)$ is geodesically convex, also $\tilde \rho(\gamma)$ is geodesically convex in $\gamma$. 
This together with \eqref{eq:tildeg} implies that $\tilde g(\gamma,\eta)$ is jointly geodesically convex in both of its arguments. 
This implies geodesic convexity of $g(\gamma)$. 

Thus, $\frac1M\sum_{m=1}^M \rho (s_{m,i}(\gamma))$ is geodesically convex in $\gamma \in \reals_{\geq0}$.
This completes the proof.

\revised{
\subsection{Proof of Theorem~\ref{thm:globalminimum}}
\label{proof:thm_globalminimum}
\text{(a)}
This follows directly from Theorem~\ref{thm:gconvexity_Li}, as any geodesically convex function has the property that any local minimum is a global minimum.

\text{(b)}
Since $\rho(\cdot)$ is bounded below, the first term of $\mathcal{L}_i(\gamma)$, $\frac{1}{bM}\sum_{m=1}^M \rho(s_{m,i}(\gamma))$, is bounded below. 
The second term, $\log |\bSb + \gamma\bba_i\bba_i^\herm|$, is also bounded below with a finite minimum value $\log |\bSb|$ at $\gamma = 0$. Therefore, $\mathcal{L}_i(\gamma)$ is bounded below.

Since $\rho(\cdot)$ is non-decreasing, and the second term of $\mathcal{L}_i(\gamma)$ satisfies
\begin{align*}
    \log |  \bSb +   \gamma \bba_i\bba_i^{\herm} | \to \infty \ \ \text{as} \ \ \gamma \to +\infty,
\end{align*}
we conclude that $\mathcal{L}_i(\gamma) \rightarrow \infty$ as $\gamma \rightarrow +\infty$, i.e., $\mathcal{L}_i(\gamma)$ is coercive. 
Thus, $\{\gamma^{(\ell)}\}_{\ell}$ is bounded.
It follows from geodesic convexity of the objective, that any accumulation point of $\{\gamma^{(\ell)}\}_{\ell}$ is a global minimizer of $\mathcal L_i(\gamma)$.
The combination of being bounded below and coercive ensures the existence of at least one global minimizer.

\text{(c)}
When $\rho(\cdot)$ is strictly geodesically convex, $\mathcal L_i(\gamma)$ is also strictly geodesically convex and thus has a unique global minimum.
Hence, the sequence $\{ \gamma^{(\ell)}\}_{\ell}$ converges to the unique accumulation point, which is the unique global minimizer of~$\mathcal{L}_i(\gamma)$.
}

\subsection{Proof of Theorem~\ref{thm:monotonicityofsequence}}
\label{apdx:monotonicity}
Using the concavity of \( \rho(\cdot) \), we have
\begin{align}
    \rho(\eta) \leq \rho(\eta^0) + (\eta - \eta^0) \rho'(\eta^0), \quad \forall \eta, \eta^0.
\end{align}
Applying \( \eta = s_{m,i}(\gamma) \) and \( \eta^0 = s_{m,i}(\gamma^{(\ell)}) \) yields
\begin{align}
    & \rho(s_{m,i}(\gamma)) \notag \\
    & \leq \rho(s_{m,i}(\gamma^{(\ell)})) + \rho'(s_{m,i}(\gamma^{(\ell)}))(s_{m,i}(\gamma) - s_{m,i}(\gamma^{(\ell)})) \notag \\
    & = \rho(s_{m,i}(\gamma^{(\ell)})) + u_{m,i}(\gamma^{(\ell)})(s_{m,i}(\gamma) - s_{m,i}(\gamma^{(\ell)})),
\end{align}
where \( u_{m,i}(\gamma^{(\ell)}) = \rho'(s_{m,i}(\gamma^{(\ell)})) \).  

Thus, we can construct the following surrogate function for the conditional negative log-likelihood function~\eqref{eq:cond_nllf} as
\begin{align}
    & Q_i(\gamma \mid \gamma^{(\ell)}) = \frac{1}{bM} \sum_{m=1}^M \Big( \rho(s_{m,i}(\gamma^{(\ell)})) \notag \\
    & + u_{m,i}(\gamma^{(\ell)})(s_{m,i}(\gamma) - s_{m,i}(\gamma^{(\ell)})) \Big) + \log|\bSb + \gamma\bba_i\bba_i^{\herm}|.
\end{align}
This surrogate function satisfies
\begin{subequations}
    \begin{align}
        Q_i(\gamma \mid \gamma^{(\ell)}) & \geq \ccalL_i(\gamma), \label{eq_mono_proof:MM1}\\
        Q_i(\gamma^{(\ell)} \mid \gamma^{(\ell)}) & = \ccalL_i(\gamma^{(\ell)}). \label{eq_mono_proof:MM2}
    \end{align}
\end{subequations}

Taking the first derivative of \( Q_i(\gamma \mid \gamma^{(\ell)}) \) with respect to \( \gamma \), we have
\begin{align}
    Q'_i(\gamma \mid \gamma^{(\ell)}) = \frac{1}{bM} \sum_{m=1}^M u_{m,i}(\gamma^{(\ell)}) \frac{\partial s_{m,i}(\gamma)}{\partial \gamma} + \frac{d_i}{1 + \gamma d_i}.
\end{align}
Setting this derivative to zero, we obtain the minimizer of \( Q_i(\gamma \mid \gamma^{(\ell)}) \),
\begin{align}
    \hhatgamma &= \frac{ \frac{1}{bM} \sum_{m=1}^M u_{m,i}(\gamma^{(\ell)}) |\bby_m^{\herm}\bbb_i |^2 - d_i}{d_i^2}.
\end{align}
Since this update is equivalent to the FP iteration step~\eqref{eq_fp_alg:4_nonnegative_gamma}, we have \( \gamma^{(\ell+1)} = \hhatgamma \).

Since \( \gamma^{(\ell+1)} \) minimizes \( Q_i(\gamma \mid \gamma^{(\ell)}) \), we have
\begin{align}
    Q_i(\gamma^{(\ell+1)} \mid \gamma^{(\ell)}) \leq Q_i(\gamma^{(\ell)} \mid \gamma^{(\ell)}).
\end{align}
Combining this with \eqref{eq_mono_proof:MM1} and \eqref{eq_mono_proof:MM2}, we have
\begin{align}
    \ccalL_i(\gamma^{(\ell+1)}) \leq Q_i(\gamma^{(\ell+1)} \mid \gamma^{(\ell)}) \leq Q_i(\gamma^{(\ell)} \mid \gamma^{(\ell)}) = \ccalL_i(\gamma^{(\ell)}).
\end{align}
Thus, we have shown that \eqref{eq:proof:inequality} holds.
\revised{If $\rho(\eta)$ is increasing, the strict concavity, $\rho(\eta) < \rho(\eta^0) + (\eta - \eta^0)\rho'(\eta^0)$, holds. It implies than that \eqref{eq:proof:inequality} holds as stict inequality, i.e., $ \ccalL_i(\gamma^{(\ell+1)})< \ccalL_i(\gamma^{(\ell)})$.}
This completes the proof. 

\bibliographystyle{IEEEtran}
\bibliography{IEEEabrv, reference}
\end{document}